\theoremstyle{plain}
\numberwithin{equation}{section}
\newtheorem{thm}{Theorem}[section]
\newtheorem{cor}[thm]{Corollary}
\newenvironment{exam}[1]%
{\begin{flushleft}\textbf{Example #1}.\enspace}%
{\end{flushleft}}
\newcommand{\positive}{{\mathbb N}}
\newcommand{\complex}{{\mathbb C}}
\newcommand{\real}{{\mathbb R}}
\newcommand{\ascript}{{\mathcal A}}
\newcommand{\cscript}{{\mathcal C}}
\newcommand{\pscript}{{\mathcal P}}
\newcommand{\sscript}{{\mathcal S}}
\newcommand{\rmcyl}{\mathrm{cyl}}
\newcommand{\rmre}{\mathrm{Re}}
\newcommand{\chat}{\widehat{c}}
\newcommand{\atilde}{\widetilde{a}}
\newcommand{\cbar}{\overline{c}}
\newcommand{\alphabar}{\overline{\alpha}}
\newcommand{\iunderbar}{\underline{i}}
\newcommand{\junderbar}{\underline{j}}
\newcommand{\kunderbar}{\underline{k}}
\newcommand{\ctimes}{\mathrel{\mathlarger\cdot}}
\newcommand{\ouparrow}{\mathord{\shortuparrow}}
\newcommand{\ab}[1]{\left|#1\right|}
\newcommand{\brac}[1]{\left\{#1\right\}}
\newcommand{\paren}[1]{\left(#1\right)}
\newcommand{\sqbrac}[1]{\left[#1\right]}
\newcommand{\elbows}[1]{{\left\langle#1\right\rangle}}
\newcommand{\ket}[1]{{\left|#1\right>}}
\newcommand{\bra}[1]{{\left<#1\right|}}
\begin{document}

\title{THE UNIVERSE AS\\A QUANTUM COMPUTER
}
\author{S. Gudder\\ Department of Mathematics\\
University of Denver\\ Denver, Colorado 80208, U.S.A.\\
sgudder@du.edu
}
\date{}
\maketitle

\begin{abstract}
This article presents a sequential growth model for the universe that acts like a quantum computer. The basic constituents of the model are a special type of causal set (causet) called a $c$-causet. A $c$-causet is defined to be a causet that is independent of its labeling. We characterize $c$-causets as those causets that form a multipartite graph or equivalently those causets whose elements are comparable whenever their heights are different. We show that a $c$-causet has precisely two $c$-causet offspring. It follows that there are $2^n$ $c$-causets of cardinality $n+1$. This enables us to classify $c$-causets of cardinality $n+1$ in terms of $n$-bits. We then quantize the model by introducing a quantum sequential growth process. This is accomplished by replacing the $n$-bits by $n$-qubits and defining transition amplitudes for the growth transitions. We mainly consider two types of processes called stationary and completely stationary. We show that for stationary processes, the probability operators are tensor products of positive rank-1 qubit operators. Moreover, the converse of this result holds. Simplifications occur for completely stationary processes. We close with examples of precluded events.
\end{abstract}

\section{Introduction}  
One frequently hears people say that the universe acts like a giant quantum computer, but when pressed they are usually short on details. This article attempts to begin giving these details. It should be emphasized that only a basic framework is presented and much work remains to be done. If this idea is correct, then great benefits will result. One benefit being better understanding of the universe itself and another is the ability to tap into a source of enormous computational power.

We first present a theory of discrete quantum gravity in terms of causal sets (causets) \cite{gud13, sor03,sur11}. Unlike previous sequential growth models the basic elements of this theory are a special type of causet called a covariant causet ($c$-causet). A
$c$-causet is defined to be a causet that is independent of its labeling. That is, two different labelings of a $c$-causet are isomorphic. The restriction of a growth model to $c$-causets provides great simplifications. For example, every $c$-causet possesses a unique
$c$-causet history and has precisely two covariant offspring. It follows that there are $2^n$ $c$-causets of cardinality $n+1$. This enables us to classify $c$-causets of cardinality $n+1$ in terms of $n$-bits. The framework of a classical computer is already emerging. We characterize $c$-causets as those causets that form a multipartite graph or equivalently those causets whose elements are comparable whenever their heights are different.

We next quantize the model by introducing a quantum sequential growth process. This is accomplished by replacing the $n$-bits with
$n$-qubits and defining transition amplitudes for the growth transitions. The transition amplitudes are given by complex-valued coupling constants $c_{n,j}$, $j=0,1,\ldots,2^{n-1}$. If the coupling constants are independent of $j$, we call the process stationary and if they are independent of $n$ and $j$ we call the process completely stationary. We show that for stationary processes the probability operators that determine the quantum dynamics are tensor products of rank-1 qubit operators. Moreover, the converse of this result holds. Simplifications occur for completely stationary processes. In this case, all the qubit operators are the same and can be related to spin operators. We close with some examples of precluded events in the completely stationary case.

\section{Covariant Causets} 
In this article we call a finite partially ordered set a \textit{causet}. If two causets are order isomorphic, we consider them to be identical. If $a$ and $b$ are elements of a causet $x$, we interpret the order $a<b$ as meaning that $b$ is in the causal future of $a$ and $a$ is in the causal past of $b$. An element $a\in x$ is \textit{maximal} if there is no $b\in x$ with $a<b$. If $a<b$ and there is no $c\in x$ with $a<c<b$, then $a$ is a \textit{parent} of $b$ and $b$ is a \textit{child} of $a$. If $a,b\in x$ we say that $a$ and $b$ are \textit{comparable} if $a\le b$ or $b\le a$. A \textit{chain} in $x$ is a set of mutually comparable elements of $x$ and an \textit{antichain} is a set of mutually incomparable elements of $x$. The \textit{height} of $a\in x$ is the cardinality of the longest chain whose largest element is $a$. The height of $x$ is the maximum of the heights of its elements. We denote the cardinality of $x$ by $\ab{x}$.

If $x$ and $y$ are causets with $\ab{y}=\ab{x}+1$, then $x$ \textit{produces} $y$ if $y$ is obtained from $x$ by adjoining a single maximal element $a$ to $x$. In this case we write $y=x\shortuparrow a$ and use the notation $x\to y$. If $x\to y$, we also say that $x$ is a \textit{producer} of $y$ and $y$ is an \textit{offspring} of $x$. In general, $x$ may produce many offspring and $y$ may be the offspring of many producers.

A \textit{labeling} for a causet $x$ is a bijection $\ell\colon x\to\brac{1,2,\ldots ,\ab{x}}$ such that $a,b\in x$ with $a<b$ implies that
$\ell (a)<\ell (b)$. A \textit{labeled causet} is a pair $(x,\ell )$ where $\ell$ is a labeling of $x$. For simplicity, we frequently write
$x=(x,\ell )$ and call $x$ an $\ell$-\textit{causet}. Two $\ell$-causets $x$ and $y$ are \textit{isomorphic} if there exists a bijection
$\phi\colon x\to y$ such that $a<b$ if and only if $\phi (a)<\phi (b)$ and $\ell\sqbrac{\phi (a)}=\ell (a)$ for every $a\in x$. Isomorphic
$\ell$-causets are considered identical as $\ell$-causets. It is not hard to show that any causet can be labeled in many different ways but there are exceptions and these are the ones of importance in this work. A causet is \textit{covariant} if it has a unique labeling (up to $\ell$-causet isomorphism). Covariance is a strong restriction which says that the elements of the causet have a unique ``birth order'' up to isomorphism. We call a covariant causet a $c$-coset.

We denote the set of $c$-causets with cardinality $n$ by $\pscript _n$ and the set of all $c$-causets by $\pscript =\cup\pscript _n$. Notice that any nonempty $c$-causet $y$ has a unique producer. Indeed, if $y$ had two different producers $x_1,x_2$ then $x_1$ and $x_2$ could be labeled differently and these could be used to give different labelings for $y$. If $x\in\pscript$, then the parent-child relation $a\prec b$ makes $x$ into a graph $(x,\prec )$. A graph $G$ is \textit{multipartite} if there is a partition of its vertices
$V=\cup V_j$ such that the vertices of $V_j$ and $V_{j+1}$ are adjacent and there are no other adjacencies. 

\begin{thm}       
\label{thm21}
The following statements for a causet $x$ are equivalent.
{\rm (a)}\enspace $x$ is covariant,
{\rm (b)}\enspace the graph $(x,\prec )$ is multipartite,
{\rm (c)}\enspace $a,b\in x$ are comparable whenever $a$ and $b$ have different heights.
\end{thm}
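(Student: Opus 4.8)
The plan is to organize everything around the \emph{height partition}. For $k\ge 1$ let $V_k=\brac{a\in x:\mathrm{ht}(a)=k}$, where $\mathrm{ht}(a)$ denotes the height of $a$. The one fact I would establish at the outset, and use repeatedly, is purely order-theoretic and needs none of (a)--(c): if $a<b$ then $\mathrm{ht}(a)<\mathrm{ht}(b)$, since prepending a longest chain ending at $a$ to $b$ produces a strictly longer chain ending at $b$. Consequently each $V_k$ is an antichain, and because a longest chain ending at a height-$h$ element meets every height $1,\ldots,h$, the nonempty levels are exactly $V_1,\ldots,V_H$ with $H=\mathrm{ht}(x)$. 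Throughout I read (b) with the $V_j$ taken to be these height levels, so that (b) asserts that every element of $V_k$ is a parent of every element of $V_{k+1}$ and that there are no other covers. My route is then $(c)\Rightarrow(b)$, $(c)\Rightarrow(a)$, $(b)\Rightarrow(c)$, and $(a)\Rightarrow(c)$.

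For the two implications out of (c), note that (c) together with the opening fact pins comparability down completely: if $\mathrm{ht}(a)<\mathrm{ht}(b)$ then $a,b$ are comparable by (c), and $b<a$ is impossible (it would force $\mathrm{ht}(b)<\mathrm{ht}(a)$), so $a<b$. Hence $a<b\iff\mathrm{ht}(a)<\mathrm{ht}(b)$, and $x$ is the ordinal sum of the antichains $V_1,\ldots,V_H$. A pair then forms a cover exactly when it lies in consecutive levels, and every such pair does, which is precisely (b). For (a), a labeling is just a listing of $V_1$, then $V_2$, and so on, each $V_k$ ordered arbitrarily; the order-automorphisms of $x$ are exactly the maps permuting each $V_k$ internally, and this group acts transitively on the set of such listings, so any two labelings are isomorphic and $x$ is covariant.

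The implication $(b)\Rightarrow(c)$ is a short chaining argument: if $\mathrm{ht}(a)=k<\ell=\mathrm{ht}(b)$, completeness of the joins between consecutive (nonempty) levels gives $a<u_{k+1}<\cdots<u_{\ell-1}<b$ for any intermediate representatives $u_j\in V_j$, so $a<b$ and in particular $a,b$ are comparable.

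The substantive step is $(a)\Rightarrow(c)$, which I would prove in contrapositive by producing two non-isomorphic labelings whenever (c) fails. The clean observation that makes this manageable is that order-automorphisms preserve height, so the height sequence $\paren{\mathrm{ht}(\ell^{-1}(1)),\ldots,\mathrm{ht}(\ell^{-1}(\ab{x}))}$ is identical for any two isomorphic labelings; it therefore suffices to exhibit two labelings with \emph{different} height sequences. Starting from incomparable $a,b$ with $\mathrm{ht}(a)<\mathrm{ht}(b)$, I would first replace $b$ by the penultimate vertex of a longest chain ending at $b$ (still incomparable to $a$, with height one smaller) and iterate, reducing to the case $\mathrm{ht}(b)=\mathrm{ht}(a)+1=:p+1$. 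Writing $I=\brac{c:c\le b}$, one labeling lists the ideal $I$ first, forcing the single height-$(p+1)$ element $b$ into position $\ab{I}$, so the length-$\ab{I}$ prefix contains $\ab{I}-1$ elements of height $\le p$; the other lists the ideal $E=\brac{c:c<b}\cup\brac{c:c\le a}$ first, and since $a\in E\setminus I$ gives $\ab{E}\ge\ab{I}$ while every element of $E$ has height $\le p$, the length-$\ab{I}$ prefix now contains $\ab{I}$ elements of height $\le p$. These prefix counts differ, so the height sequences differ and the labelings are non-isomorphic, whence $x$ is not covariant. I expect the main obstacle to be exactly this construction: checking that $b$ is forced to the top of its own ideal, that $E$ is an ideal with $\ab{E}\ge\ab{I}$, and that every element of $E$ has height at most $p$ — all of which rest on the opening fact that order-relations strictly increase height.
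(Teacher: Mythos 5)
Your proof is correct, and while its overall architecture parallels the paper's, the decisive step is carried out by a genuinely different construction. The directions you share with the paper are the easy ones: your (c)$\Rightarrow$(a) is the paper's (b)$\Rightarrow$(a) (a labeling must list the height levels in order, and permutations within a level are order-automorphisms), and your two implications between (b) and (c) simply supply the details behind the paper's ``clearly equivalent,'' under the same reading of (b) via the height partition that the paper uses. The divergence is in (a)$\Rightarrow$(b)/(c). The paper argues by surgery: given incomparable $a,b$ in consecutive height levels, it deletes maximal elements until $b$ is the unique element of top height, labels $a$ and $b$ with the top two labels in either order, and re-adjoins the deleted elements --- leaving implicit both why the two labelings of the truncation are non-isomorphic and why they extend to non-isomorphic labelings of $x$. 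You work inside $x$ itself: you make the height sequence $k\mapsto\mathrm{ht}\paren{\ell^{-1}(k)}$ an explicit isomorphism invariant and compare two linear extensions, one beginning with the principal ideal $I=\brac{c\colon c\le b}$ (forcing $b$, the unique element of $I$ of height $p+1$, into position $\ab{I}$) and one beginning with the ideal $\brac{c\colon c<b}\cup\brac{c\colon c\le a}$, whose length-$\ab{I}$ prefixes then contain $\ab{I}-1$ versus $\ab{I}$ elements of height at most $p$. This buys a fully explicit certificate of non-isomorphism and avoids the delete-and-re-adjoin step, at the cost of a slightly longer setup; the paper's version is shorter on the page but asks the reader to supply essentially the same invariant. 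One minor observation: your reduction to $\mathrm{ht}(b)=\mathrm{ht}(a)+1$, though sound (the penultimate element of a longest chain to $b$ is indeed still incomparable to $a$), is not actually needed, since every element of your second ideal already has height at most $\mathrm{ht}(b)-1$ in general.
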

\begin{proof}
Conditions (b) and (c) are clearly equivalent. To prove that (a) implies (b), suppose $x$ is covariant and let $x=\cup _{i=0}^my_i$ where $y_i$ is the set of elements in $x$ of height $i$. Suppose $a\in y_n$, $b\in y_{n+1}$ and $a\not< b$. We can delete maximal elements of $y$ until $b$ is maximal and the only element of height $n+1$. Denote the resulting causet by $z$. We can label $b$ by $\ab{z}$,
$a$ by $\ab{z}-1$ and consistently label the other elements of $z$ so that $z$ is an $\ell$-causet. We can also label $b$ by $\ab{z}-1$, $a$ by $\ab{z}$ and keep the same labels for the other elements of $z$. This gives two nonisomorphic labelings of $z$. Adjoining maximal elements to $z$ to obtain $x$, we have $x$ with two nonisomorphic labelings which is a contradiction. Hence, $a<b$ so $a$ is a parent of $b$. It follows that $x$ is multipartite. To prove that (b) implies (a), suppose the graph $(x,\prec )$ is multipartite. Letting
$x=\cup _{i=0}^my_i$ where $y_i$ is the set of elements of height $i$, it follows that $a<b$ for all $a\in y_i$, $b\in y_{i+1}$, $i=0,\ldots ,m-1$. We can write
\begin{align*}
y_0&=\brac{a_1,\ldots ,a_{\ab{y_0}}}\\
y_1&=\brac{a_{\ab{y_0}+1},\ldots ,a_{\ab{y_0}+\ab{y_1}}}\\
\vdots&\\
y_m&=\brac{a_{\ab{y_0}+\cdots +\ab{y_{m-1}}+1},\ldots ,a_{\ab{y_0}+\cdots +\ab{y_m}}}
\end{align*}
where $j$ is the label on $a_j$. This gives a labeling of $x$ and is the only labeling up to isomorphism.
\end{proof}

\begin{thm}       
\label{thm22}
If $x\in\pscript$, then $x$ has precisely two covariant offspring.
\end{thm}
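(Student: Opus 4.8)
The plan is to analyze the height structure of $x$ and to show that covariance of a candidate offspring almost completely determines the single adjoined element, leaving exactly two possibilities. First I would write $x=\bigcup_{i=0}^{m}y_i$, where $y_i$ is the set of elements of height $i$, as in the proof of Theorem~\ref{thm21}. By condition (c) of that theorem, any $a\in y_i$ and $b\in y_j$ with $i\neq j$ are comparable, and since the smaller-height element must be the lesser one, this yields the total stacking property $y_i<y_j$ (every element of $y_i$ lies below every element of $y_j$) whenever $i<j$. Any offspring has the form $y=x\shortuparrow a$ for a single maximal $a$; since $a$ is maximal, no element of $x$ lies above it, so adjoining $a$ leaves the heights of the elements of $x$ unchanged in $y$. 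The strategy is then to use covariance of $y$ to pin down both the height of $a$ and its set of predecessors.

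The key step is that, by condition (c) applied to $y$, the element $a$ must be comparable with every element whose height differs from that of $a$; as $a$ is maximal, every such element must lie strictly below $a$. I would first note that the height $h$ of $a$ cannot exceed $m+1$, since all predecessors of $a$ lie in $x$ and so have height at most $m$. Next, $h$ cannot be less than $m$: an element of $y_m$ has height $m\neq h$, hence would have to lie below $a$, forcing $h\ge m+1$, a contradiction. Thus $h\in\brac{m,m+1}$, and each value fixes the predecessor set of $a$ uniquely. If $h=m+1$, every element of $x$ has height at most $m<m+1$, so all of $x$ must lie below $a$. If $h=m$, the comparability requirement forces all of $y_0\cup\cdots\cup y_{m-1}$ below $a$, while no element of $y_m$ may lie below $a$ (that would give $h\ge m+1$); hence the predecessors of $a$ are exactly $y_0\cup\cdots\cup y_{m-1}$.

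It then remains to check that both configurations yield covariant causets and that they are distinct. In the first, $y$ has height levels $y_0,\ldots,y_m,\brac{a}$; in the second, $y_0,\ldots,y_{m-1},y_m\cup\brac{a}$. In each case elements of different heights are comparable, so Theorem~\ref{thm21} gives covariance, and the two offspring differ in maximum height ($m+1$ versus $m$), so they are not isomorphic. This is precisely the dichotomy of starting a new top level versus enlarging the existing one, matching the two bit values in the classification. I expect the main obstacle to be the case analysis of the second paragraph, specifically arguing that covariance forces \emph{all} of a given level below $a$ rather than merely some of it, and that $a$ can only be attached at the top rather than inserted into an intermediate level; both points follow from applying condition (c) to the maximal element $a$. (Here $x$ is taken nonempty, the single-element causet serving as the base case.)
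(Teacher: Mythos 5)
Your proof is correct and arrives at the same two offspring as the paper, but the way you rule out any further offspring is genuinely different. The paper likewise constructs $x_1=x\shortuparrow a$ (new element above all of $x$) and $x_2=x\shortuparrow b$ (new element above everything except the top level) and checks covariance via the multipartite criterion; for exhaustiveness, however, it argues directly from the definition of covariance: if $y=x\shortuparrow c$ is of neither type, then $c$ is incomparable with a maximal element of $x$ carrying the largest label, and interchanging the two largest labels yields two nonisomorphic labelings of $y$, contradicting covariance. You instead apply condition (c) of Theorem~\ref{thm21} to the putative covariant offspring itself, which pins the height $h$ of the adjoined element to $\brac{m,m+1}$ and then forces its predecessor set to be either all of $x$ or exactly $y_0\cup\cdots\cup y_{m-1}$ (the latter set being downward closed by the stacking property, so both adjunctions are well defined). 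Your route buys a cleaner and more uniform exhaustiveness argument --- it simultaneously shows that there are no other ``types'' and that there is only one offspring of each type, a point the paper dispatches with a one-line assertion --- at the cost of invoking both directions of the equivalence in Theorem~\ref{thm21} rather than just the definition of a labeling. Both arguments are sound; yours is arguably the more transparent of the two.
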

\begin{proof}
By Theorem~\ref{thm21}, the graph $(x,\prec )$ is multipartite. Suppose $x$ has height $n$. Let $x_1=x\shortuparrow a$ where $a$ has all the elements of height $n$ as parents. Then $a$ is the only element of $x_1$ with height $n+1$. Hence, $x_1$ is multipartite so by Theorem~\ref{thm21}, $x_1$ is a covariant offspring of $x$. Let $x_2=x\shortuparrow b$ where $b$ has all the elements of height $n-1$ in $x$ as parents. (If $n=1$, then $b$ has no parents.) It is clear that $x_2$ is a multipartite graph. By Theorem~\ref{thm21}, $x_2$ is a covariant offspring of $x$. Also, there is only one covariant offspring of each of these two types. Let $y=x\shortuparrow c$ be a covariant offspring of $x$ that is not one of these two types and let $a\in x$ have label $\ab{x}$. Then $a$ and $c$ are incomparable and we can label $x$ by $\ab{x}+1$. If we interchange the labels of $a$ and $c$, we get a nonisomorphic labeling of $y$ which gives a contradiction. We conclude that $x$ has precisely two covariant offspring.
\end{proof}

\begin{cor}       
\label{cor23}
There are $2^n$ $c$-causets of cardinality $n+1$.
\end{cor}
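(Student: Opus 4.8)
The plan is to argue by induction on $n$, combining Theorem~\ref{thm22} with the unique-producer property of $c$-causets noted just before Theorem~\ref{thm21}. The claim is that $\ab{\pscript_{n+1}}=2^n$ for every $n\ge 0$. For the base case $n=0$, the only causet of cardinality $1$ is a single point, which admits a unique labeling and is therefore covariant, so $\ab{\pscript_1}=1=2^0$.

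For the inductive step I assume $\ab{\pscript_{n+1}}=2^n$ and count $\pscript_{n+2}$ through the producer relation. First I would check that every $y\in\pscript_{n+2}$ has a unique producer lying in $\pscript_{n+1}$: uniqueness is the cited remark, and covariance of the producer follows from Theorem~\ref{thm21}, since deleting from the multipartite graph $y$ the maximal element carrying the largest label again yields a multipartite causet. Thus each member of $\pscript_{n+2}$ is a covariant offspring of exactly one element of $\pscript_{n+1}$. Conversely, Theorem~\ref{thm22} gives each of the $2^n$ elements of $\pscript_{n+1}$ precisely two covariant offspring, and these two are distinct: in the notation of that proof the adjoined element has height $n+1$ in $x_1$ but height $n$ in $x_2$. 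Consequently the map carrying a pair (an element of $\pscript_{n+1}$ together with a choice of one of its two offspring) to that offspring is a bijection onto $\pscript_{n+2}$, whence $\ab{\pscript_{n+2}}=2\ab{\pscript_{n+1}}=2^{n+1}$ and the induction closes.

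The only point demanding care is the distinctness bookkeeping: that no producer's two offspring coincide and that no two distinct producers share an offspring. The former is secured by the height discrepancy of the adjoined element recorded above, and the latter is exactly the unique-producer property; together they supply the injectivity of the counting map, which is what forces the clean factor of two at each stage. Everything else is routine.
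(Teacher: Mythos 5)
Your proof is correct and follows essentially the same route as the paper: induction on $n$ using Theorem~\ref{thm22} for the factor of two, with surjectivity coming from the fact that deleting a maximal element of a $c$-causet yields a $c$-causet and injectivity from the unique-producer remark. You are merely more explicit than the paper about the bookkeeping (distinctness of the two offspring and covariance of the producer), which is a point in your favor.
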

\begin{proof}
Notice that we obtain all $c$-causets from the producer-offspring process of Theorem~\ref{thm22}. Indeed, take any $x\in\pscript$ and delete maximal elements until we arrive at the one element $c$-causet. In this way, $x$ is obtained from the process of
Theorem~\ref{thm22}. We now employ induction on $n$. There are $1=2^{1-1}$ $c$-causets of cardinality 1. If the result holds for
$c$-causets of cardinality $n$, then by Theorem~\ref{thm22} there are $2\ctimes 2^{n-1}=2^n$ $c$-causets of cardinality $n+1$. Hence, the result holds for $c$-causets of cardinality $n+1$.
\end{proof}

As a bonus we obtain an already known combinatorial identity. A \textit{composition} of a positive integer $n$ is a sequence of positive integers whose sum is $n$. The order of terms in the sequence is taken into account. For example the following are the compositions of $1,2,3,4,5$.
\begin{align*}
n&=1\colon 1\\
n&=2\colon 1+1,2\\
n&=3\colon 1+1+1,1+2,2+1,3\\
n&=4\colon 1+1+1+1,1+1+2,1+2+1,2+1+1,2+2,1+3,3+1,4\\
n&=5\colon 1+1+1+1+1,1+1+1+2, 1+1+2+1,1+2+1+1,\\
&\hskip 2.6pc 2+1+1+1,1+1+3,1+3+1,3+1+1,1+4,4+1,\\
&\hskip 2.6pc 2+3,3+2,1+2+2,2+1+2,2+2+1,5
\end{align*}
The reader has surely noticed that for $n=1,2,3,4,5$, the number of compositions of $n$ is $2^{n-1}$.

\begin{cor}       
\label{cor24}
There are $2^{n-1}$ compositions of the positive integer $n$.
\end{cor}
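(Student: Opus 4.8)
The plan is to set up a bijection between compositions of a positive integer $N$ and $c$-causets of cardinality $N$, and then read off the count directly from Corollary~\ref{cor23}. The key observation is that, by Theorem~\ref{thm21}, a $c$-causet $x$ decomposes into its height layers $x=\cup_{i=0}^m y_i$, where the elements of consecutive layers $y_i$ and $y_{i+1}$ are completely adjacent in the parent-child relation while there are no other adjacencies. Consequently the entire order structure of $x$ is recovered from the single sequence of layer cardinalities $\paren{\ab{y_0},\ab{y_1},\ldots,\ab{y_m}}$, in which each $\ab{y_i}\ge 1$ (a layer of height $i+1$ forces nonempty layers of every smaller height) and $\sum_{i=0}^m\ab{y_i}=\ab{x}$.

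First I would define the map $x\mapsto\paren{\ab{y_0},\ldots,\ab{y_m}}$ sending a $c$-causet of cardinality $N$ to the sequence of its layer sizes; since the parts are positive integers summing to $N$, this sequence is precisely a composition of $N$. Next I would verify injectivity: two $c$-causets with the same layer-size sequence have identical complete-multipartite structure and so are order isomorphic, hence identical as causets. Then I would verify surjectivity: given any composition $N=c_0+c_1+\cdots+c_m$, the multipartite graph whose layers have sizes $c_0,\ldots,c_m$ with complete adjacency between consecutive layers satisfies condition~(b) of Theorem~\ref{thm21}, so it is a $c$-causet of cardinality $N$ mapping to the given composition.

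Having established the bijection, I would finish by counting. By Corollary~\ref{cor23} there are $2^{N-1}$ $c$-causets of cardinality $N$, so the bijection yields exactly $2^{N-1}$ compositions of $N$; taking $N=n$ gives the claim.

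The main obstacle is the well-definedness of the bijection, specifically justifying that a $c$-causet is fully determined by its layer-size sequence. This rests on reading the definition of multipartite as requiring \emph{complete} adjacency between consecutive layers, so that no further combinatorial choices remain once the layer sizes are fixed; this is exactly what Theorem~\ref{thm21} supplies. I would note in passing that the identity also admits a direct proof, in which a composition of $n$ corresponds to a choice of separators in the $n-1$ gaps of $n=1+1+\cdots+1$, again giving $2^{n-1}$ choices; but the $c$-causet argument is the one suggested by the surrounding development.
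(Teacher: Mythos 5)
Your proposal is correct and follows the paper's own route: the paper's proof is exactly the assertion of a bijection between compositions of $n$ and multipartite graphs (equivalently, by Theorem~\ref{thm21}, $c$-causets) on $n$ vertices, followed by an appeal to Corollary~\ref{cor23}. You have simply filled in the details of that bijection (well-definedness, injectivity, surjectivity via the layer-size sequence), which the paper leaves implicit.
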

\begin{proof}
There is a bijection between compositions of $n$ and multipartite graphs with $n$ vertices. The result follows from
Corollary~\ref{cor23}.
\end{proof}

\begin{figure}
\vglue -8pc
\includegraphics[scale=.85]{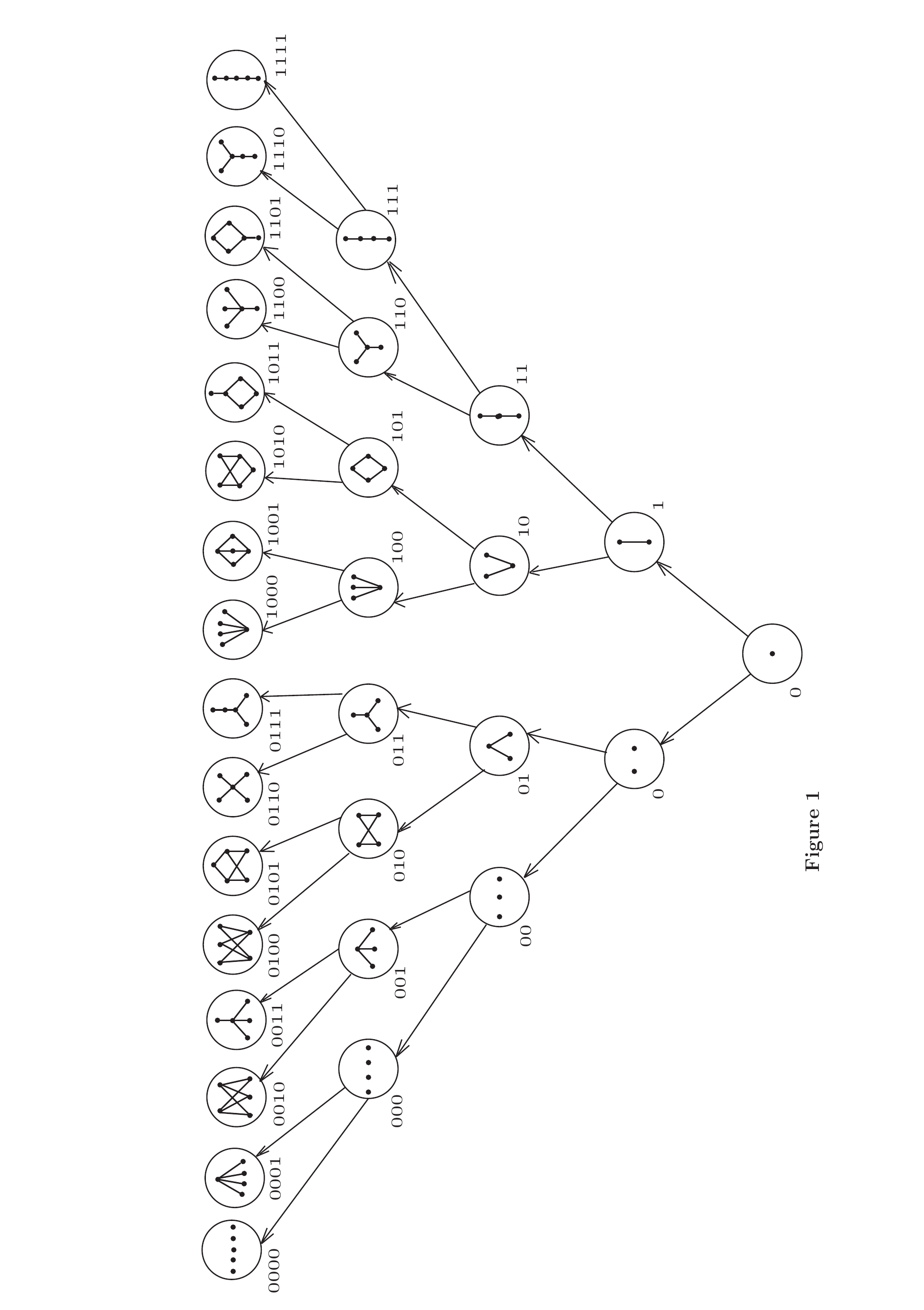}
\end{figure}
\medskip

The pair $(\pscript, \to )$ forms a partially ordered set in its own right. Moreover, $(\pscript ,\to )$ also forms a graph that is a tree. Figure~1 depicts the first five levels of this tree. The binary designations in Figure~1 will now be explained. By Corollary~\ref{cor23}, at height $n+1$ there are $2^n$ $c$-causets so binary numbers fit well, but how do we define a natural order for the $c$-causets? We have seen in Theorem~\ref{thm22} that if $x\in\pscript _n$, $n=1,2,\ldots$, then $x$ has precisely two offspring in $\pscript$,
$x\to x_0,x_1$ here $x_0$ has the same height as $x$ and $x_1$ has the height of $x$ plus one. We call $x_0$ the
0-\textit{offspring} and $x_1$ the 1-\textit{offspring} of $x$. We assign a \textit{binary order} to $x\in\pscript$ recursively as follows. If
$x\in\pscript _1$, then $x$ is the unique one element $c$-causet and we designate $x$ by 0. If $x\in\pscript _{n+1}$, then $x$ has a unique producer $y\in\pscript _n$. Suppose $y$ has binary order $j_{n-1}j_{n-1}\cdots j_2j_1$, $j_i=0$ or 1. If $x$ is the 0-offspring of $y$, then we designate $x$ with $j_{n-1}\cdots j_2j_10$ and if $x$ is a 1-offspring of $y$, then we designate $x$ with
$j_{n-1}\cdots j_2j_11$. The reader can now check this definition with the binary order in Figure~1.

We now see the beginning development of a giant classical computer. At the $(n+1)$th step of the process, $n$-bit strings are generated. It is estimated that we are now at about the $10^{60}$th step so $(10^{60}-1)$-bit strings are being generated. There are about $2^{10^{60}}$ such strings so an enormous amount of information is being processed. When we get to quantum computers, then superpositions of strings will be possible and the amount of information increases exponentially. It is convenient to employ the notation
\begin{equation*}
\junderbar = j_nj_{n-1}\cdots j_2j_1
\end{equation*}
for an $n$-bit string. In this way we can designate each $x\in\pscript$ uniquely by $x_{n+1,\junderbar}$ where $n+1=\ab{x}$. For example, the $c$-causets at step~3 in Figure~1 are $x_{3,00},x_{3,01},x_{3,10},x_{3,11}$. In decimal notation we can also write these as $x_{3,0},x_{3,1}x_{3,2},x_{3,3}$.

The binary order that we have just discussed in equivalent to a natural order in terms of the $c$-causet structure. Let
$x=\brac{a_1,\ldots ,a_n}\in\pscript _n$ where we can assume without loss of generality that $j$ is the label of $a_j$, $j=1,\ldots ,n$. Define
\begin{equation*}
j_x\ouparrow =\brac{i\in\positive\colon a_j<a_i}
\end{equation*}
Thus, $j_x\ouparrow$ is the set of labels of the descendants of $a_j$. Order the set of $c$-cosets in $\pscript _n$ lexicographically as follows. If $x,y\in\pscript _n$, then $x<y$ if
\begin{equation*}
1_x\ouparrow =1_y\ouparrow,\cdots ,j_x\ouparrow =j_y\ouparrow , (j+1)_x\subsetneq\ouparrow (j+1)\ouparrow
\end{equation*}
It is easy to check that $<$ is a total order relation on $\pscript _n$. The next theorem, whose proof we leave to the reader, shows that the order $<$ on $\pscript _n$ is equivalent to the binary order previously discussed.

\begin{thm}       
\label{thm25}
If $x_{n,\junderbar},x_{n,\kunderbar}\in\pscript _n$, then $x_{n\junderbar}<x_{n,\kunderbar}$ if and only if $\junderbar <\kunderbar$.
\end{thm}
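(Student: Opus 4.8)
The plan is to reduce both orders to a comparison of integer sequences and then to induct on $n$, mirroring the recursive way the binary order is built from the producer--offspring process. First I would translate the data defining the lex order. By Theorem~\ref{thm21}(c), in a $c$-causet $x=\brac{a_1,\ldots ,a_n}$, labelled bottom-up by height as in the proof of Theorem~\ref{thm21}, one has $a_j<a_i$ exactly when $a_i$ has strictly greater height than $a_j$; hence each set $j_x\ouparrow$ is an \emph{upper interval} $\brac{k+1,\ldots ,n}$ of labels, where $k$ counts the elements of height at most that of $a_j$. For two such intervals, $\subsetneq$ is equivalent to strictly smaller cardinality and $=$ to equal cardinality, so the lexicographic comparison of the set-sequences $\paren{1_x\ouparrow,\ldots ,n_x\ouparrow}$ is exactly the lexicographic comparison of the integer sequences $d(x)=\paren{d_1(x),\ldots ,d_n(x)}$, where $d_j(x)=\ab{j_x\ouparrow}$ counts the descendants of $a_j$. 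Each $d(x)$ is weakly decreasing, constant on each height level, with a trailing block of $0$'s whose length equals the number of maximal elements.

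Next I would record how $d(x)$ transforms under the two offspring operations of Theorem~\ref{thm22}. Writing $x$ by its level sizes $\paren{s_0,\ldots ,s_m}$, the $0$-offspring increments $s_m$ while the $1$-offspring appends a new part $1$. Reading this off the sequence, the $1$-offspring has descendant sequence $\paren{d_1(x)+1,\ldots ,d_n(x)+1,0}$, obtained by adding $1$ to every entry and appending a $0$; the $0$-offspring has $\paren{d_1(x)+1,\ldots ,d_{n-s_m}(x)+1,0,\ldots ,0}$, obtained by adding $1$ only to the positive entries, keeping the trailing zeros, and appending one more $0$. Thus the two offspring sequences agree on the initial positive block and first differ at position $n-s_m+1$, where the $0$-offspring has $0$ and the $1$-offspring has $1$; hence the $0$-offspring precedes the $1$-offspring in lex order.

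The main step is to show the producer ordering dominates: if $y<y'$ in $\pscript _n$, then every offspring of $y$ precedes every offspring of $y'$ in $\pscript _{n+1}$. Let $p$ be the first position at which $d(y)$ and $d(y')$ differ, so $d_p(y)<d_p(y')$ and therefore $d_p(y')\ge 1$. The key observation is that weak monotonicity forces $d_{p-1}(y')\ge d_p(y')\ge 1$, so all common entries before position $p$ are strictly positive; consequently every offspring of $y$ and every offspring of $y'$ carries the identical values $d_j(y)+1$ on positions $1,\ldots ,p-1$. At position $p$ any offspring of $y$ has value at most $d_p(y)+1$, while any offspring of $y'$ has value $d_p(y')+1$, and I would check the subcases $d_p(y)\ge 1$ and $d_p(y)=0$ separately to conclude that the offspring of $y$ is strictly smaller there. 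This single-position comparison is where the argument needs care, since when $d_p(y)=0$ the $0$- and $1$-offspring of $y$ differ at position $p$, yet both must still lose to any offspring of $y'$.

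Finally I would assemble these facts by induction on $n$, with the trivial base case $\pscript _1$. For the binary order, the string of an element of $\pscript _{n+1}$ is the producer's string followed by the offspring bit, so numerical order sorts first by producer string and then by the last bit with $0$ before $1$. The two displayed properties show the lex order sorts $\pscript _{n+1}$ in exactly the same way---first by producer, then $0$-offspring before $1$-offspring---and the inductive hypothesis identifies the two producer orderings, giving $x_{n,\junderbar}<x_{n,\kunderbar}$ if and only if $\junderbar <\kunderbar$.
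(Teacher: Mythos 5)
The paper gives no proof of Theorem~\ref{thm25} to compare against---the text explicitly leaves it to the reader---so your argument must stand on its own, and it does: it is correct. The two reductions you make are the right ones. Because each $j_x\ouparrow$ is an upper interval of $\brac{1,\ldots ,n}$ (Theorem~\ref{thm21}(c) plus the level-by-level labeling), the paper's lexicographic order on set-sequences collapses to lexicographic order on the weakly decreasing integer sequences $d(x)$, and the two offspring operations of Theorem~\ref{thm22} act on $d(x)$ exactly as you describe (the $0$-offspring preserves the height, so it increments only the positive entries and lengthens the zero block; the $1$-offspring increments everything and appends one zero). Your two lemmas---$0$-offspring before $1$-offspring for a fixed producer, and all offspring of $y$ before all offspring of $y'$ whenever $y<y'$---are precisely what the induction needs, since appending the offspring bit on the right makes the binary order sort first by producer and then by that bit. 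The delicate case you flag, $d_p(y)=0$, resolves as you expect: both offspring of $y$ carry $0$ or $1$ at position $p$, any offspring of $y'$ carries $d_p(y')+1\ge 2$ there, and the entries before position $p$ agree in all four offspring because they equal $d_j(y')+1$ with $d_j(y')\ge d_p(y')\ge 1$, hence are incremented in both offspring types. As a side benefit, your machinery exposes a typo in Example~1: $x_{4,3}$ should be $(\brac{3,4},\brac{3,4},\brac{4})$, since as printed it coincides with $x_{4,7}$ and the asserted strict chain would fail.
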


\begin{exam}{1}  
We can illustrate Theorem~\ref{thm25} by considering $\pscript _4$. For the $c$-causets $x_{4,0},x_{4,1},\ldots ,x_{4,7}\in\pscript _4$ we list the sets $(1_x\ouparrow ,2_x\ouparrow ,3_x\ouparrow )$. Notice that we need not list $4_x\ouparrow=\emptyset$ in all cases of
$\pscript _4$.
\begin{align*}
x_{4,0}&\colon (\emptyset ,\emptyset ,\emptyset )\\
x_{4,1}&\colon (\brac{4},\brac{4},\brac{4})\\
x_{4,2}&\colon (\brac{3,4},\brac{3,4},\emptyset )\\
x_{4,3}&\colon (\brac{2,3,4},\brac{3,4},\brac{4})\\
x_{4,4}&\colon (\brac{2,3,4},\emptyset ,\emptyset )\\
x_{4,5}&\colon (\brac{2,3,4},\brac{4},\brac{4})\\
x_{4,6}&\colon (\brac{2,3,4},\brac{3,4},\emptyset )\\
x_{4,7}&\colon (\brac{2,3,4},\brac{3,4},\brac{4})
\end{align*}
The lexicographical order becomes:
\begin{equation*}
x_{4,0}<x_{4,1}<x_{4,2}<x_{4,3}<x_{4,4}<x_{4,5}<x_{4,6}<x_{x,7}
\end{equation*}
\end{exam}

\begin{exam}{2}   
This is so much fun that we list the sets
\begin{equation*}
(1_x\ouparrow ,2_x\ouparrow ,3_x\ouparrow ,4_x\ouparrow )
\end{equation*}
for the $c$-causets $x_{5,0},\ldots ,x_{5,15}\in\pscript _5$.
\begin{align*}
x_{5,0}&\colon (\emptyset ,\emptyset ,\emptyset ,\emptyset )\hskip 10pc x_{5,1}\colon (\brac{5},\brac{5},\brac{5},\brac{5})\\
x_{5,2}&\colon (\brac{4,5},\brac{4,5},\brac{4,5}, \emptyset )\hskip 3.75pc x_{5,3}\colon (\brac{4,5},\brac{4,5},\brac{4,5},\brac{5})\\
x_{5,4}&\colon (\brac{3,4,5},\brac{3,4,5},\emptyset , \emptyset )\hskip 4pc x_{5,5}\colon (\brac{3,4,5},\brac{3,4,5},\brac{5},\brac{5})\\
x_{5,6}&\colon (\brac{3,4,5},\brac{3,4,5},\brac{4,5}, \emptyset )\hskip 2pc x_{5,7}\colon (\brac{3,4,5},\brac{3,4,5},\brac{4,5},\brac{5})\\
x_{5,8}&\colon (\brac{2,3,4,5},\emptyset ,\emptyset , \emptyset )\hskip 6pc x_{5,9}\colon (\brac{2,3,4,5},\brac{5},\brac{5},\brac{5})\\
x_{5,10}&\colon (\brac{2,3,4,5},\brac{4,5},\brac{4,5}, \emptyset )
  \hskip 1.75pc x_{5,11}\colon (\brac{2,3,4,5},\brac{4,5},\brac{4,5},\brac{5})\\
x_{5,12}&\colon (\brac{2,3,4,5},\brac{3,4,5},\emptyset, \emptyset )
  \hskip 3pc x_{5,13}\colon (\brac{2,3,4,5},\brac{3,4,5},\brac{5},\brac{5})\\
x_{5,14}&\colon (\brac{2,3,4,5},\brac{3,4,5},\brac{4,5}, \emptyset )
  \hskip 1pc x_{5,15}\colon (\brac{2,3,4,5},\brac{3,4,5},\brac{4,5},\brac{5})
\end{align*}

This order structure $(\pscript _n,<)$ induces a topology on $\pscript _n$ whereby we can describe the ``closeness'' of $c$-causets. For example, we can place a metric on $\pscript _n$ by defining $\rho (x_{n\junderbar},x_{n,\kunderbar})=\ab{\junderbar -\kunderbar}$. If we want to keep the size of the metric reasonable, we could define
\begin{equation*}
\rho (x_{n,\junderbar},x_{n,\kunderbar})=\frac{1}{2^{n-1}}\ab{\junderbar -\kunderbar}
\end{equation*}
\end{exam}

\section{Quantum Sequential Growth Processes} 
The tree $(\pscript ,\to )$ can be thought of as a growth model and an $x\in\pscript _n$ is a possible universe at step (time) $n$. An instantaneous universe $x$ grows one element at a time in one of two ways at each step. A \textit{path} in $\pscript$ is a sequence (string) $\omega _1\omega _2\cdots$ where $\omega _i\in\pscript _i$ and $\omega _i\to\omega _{i+1}$. An $n$-\textit{path} is a finite sequence $\omega _1\omega _2\cdots\omega _n$ where again $\omega _i\in\pscript _i$ and $\omega _i\to\omega _{i+1}$. We denote the set of paths by $\Omega$ and the set of $n$-paths by $\Omega _n$. We think of $\omega\in\Omega$ as a ``completed'' universe or as a universal history. We may also view $\omega\in\Omega$ as an evolving universe. Since a $c$-causet has a unique producer, an $n$-path $\omega =\omega _1\omega _2\cdots\omega _n$ is completely determined by $\omega _n$. In other words, a $c$-causet possesses a unique history. We can thus identify $\Omega _n$ with $\pscript _n$ and we write $\Omega _n\approx\pscript _n$. If
$\omega =\omega _1\omega _2\cdots\omega _n\in\Omega _n$ we denote by $\omega\to )$ the two element subset of
$\Omega _{n+1}$ consisting of $\brac{\omega x_0,\omega x_1}$ where $x_0$ and $x_1$ are the offspring of $\omega _n$. Thus,
\begin{equation*}
(\omega\to )=\brac{\omega _1\cdots\omega _nx_0,\omega _1\cdots\omega _nx_1}
\end{equation*}
If $A\subseteq\Omega _n$ we define $(A\to )\subseteq\Omega _{n+1}$ by
\begin{equation*}
(A\to )=\cup\brac{(\omega\to )\colon\omega\in A}
\end{equation*}
Thus, $(A\to )$ is the set of one-element continuations of $n$-paths in $A$.

The set of all paths beginning with $\omega\in\Omega _n$ is called an \textit{elementary cylinder set} and is denoted by
$\rmcyl (\omega )$. If $A\subseteq\Omega _n$, then the \textit{cylinder set} $\rmcyl (A)$ is defined by
\begin{equation*}
\rmcyl (A)=\cup\brac{\rmcyl (\omega )\colon\omega\in A}
\end{equation*}
Using the notation
\begin{equation*}
\cscript (\Omega _n)=\brac{\rmcyl (A)\colon A\subseteq\Omega _n}
\end{equation*}
we see that
\begin{equation*}
\cscript (\Omega _1)\subseteq\cscript (\Omega _2)\subseteq\cdots
\end{equation*}
is an increasing  sequence of subalgebras of the \textit{cylinder algebra} $\cscript (\Omega )=\cup\cscript (\Omega _n)$. Letting
$\ascript$ be the $\sigma$-algebra generated by $\cscript (\Omega )$, we have that $(\Omega ,\ascript )$ is a measurable space. For
$A\subseteq\Omega$ we define the sets $A^n\subseteq\Omega _n$ by
\begin{equation*}
A^n=\brac{\omega _1\omega _2\cdots\omega _n\colon\omega _1\omega _2\cdots\omega _n\omega _{n+1}\cdots\in A}
\end{equation*}
That is, $A^n$ is the set of $n$-paths that can be continued to a path in $A$. We think of $A^n$ as the $n$-step approximation to $A$. We have that
\begin{equation*}
\rmcyl (A_1)\supseteq\rmcyl (A_2)\supseteq\cdots\supseteq A
\end{equation*}
so that $A\subseteq\cap\rmcyl (A^n)$. However, $A\ne\cap\rmcyl (A^n)$ in general, even if $A\in\ascript$.

Let $H_n=L_2(\Omega _n)=L_2(\pscript _n)$ be the $n$-\textit{path Hilbert space} $\complex ^{\Omega _n}=\complex ^{\pscript _n}$ with the usual inner product
\begin{equation*}
\elbows{f,g}=\sum\brac{\overline{f(\omega )}g(\omega )\colon\omega\in\Omega _n}
\end{equation*}
For $A\subseteq\Omega _n$, the characteristic function $\chi _A\in H_n$ has norm $\|\chi _A\|=\sqrt{\ab{A}\,}$. In particular
$1_n=\chi _{\Omega _n}$ satisfies
\begin{equation*}
\|1_n\|=\ab{\Omega _n}^{1/2}=2^{(n-1)/2}
\end{equation*}
A positive operator $\rho$ on $H_n$ that satisfies $\elbows{\rho 1_n,1_n}=1$ is called a \textit{probability operator} \cite{gud13}. Corresponding to a probability operator $\rho$ we define the \textit{decoherence functional} \cite{gud13,hen09,sor07}
\begin{equation*}
D_\rho\colon 2^{\Omega _n}\times 2^{\Omega _n}\to\complex
\end{equation*}
by $D_\rho (A,B)=\elbows{\rho\chi _B,\chi _A}$. We interpret $D_\rho (A,B)$ as a measure of the interference between the events $A$ and $B$ when the system is described by $\rho$. We also define the $q$-\textit{measure} $\mu _\rho\colon 2^{\Omega _n}\to\real ^+$ by $\mu _\rho (A)=D_\rho (A,A)$ and interpret $\mu _\rho (A)$ as the quantum propensity of the event $A\subseteq\Omega _n$
\cite{gud13,sor94,sur11}. In general, $\mu _\rho$ is not additive on $2^{\Omega _n}$ so $\mu _\rho$ is not a measure. However,
$\mu _\rho$ is \textit{grade}-2 \textit{additive} \cite{gud13,sor94,sur11} in the sense that if $A,B,C\in 2^{\Omega _n}$ are mutually disjoint, then
\begin{equation*}
\mu _\rho (A\cup B\cup C)=\mu _\rho (A\cup B)+\mu _\rho (A\cup C)+\mu _\rho (B\cup C)
  -\mu _\rho (A)-\mu _\rho (B)-\mu _\rho (C)
\end{equation*}
Let $\rho _n$ be a probability operator on $H_n$, $n=1,2,\ldots\,$. We say that the sequence $\brac{\rho _n}$ is \textit{consistent} if
\begin{equation*}
D_{\rho _{n+1}}(A\to ,B\to )=D_{\rho _n}(A,B)
\end{equation*}
for all $A,B\subseteq\Omega _n$ \cite{gud13}. We call a consistent sequence $\brac{\rho _n}$ a \textit{covariant quantum sequential growth process} (CQSGP). Let $\rho _n$ be a CQSGP and denote the corresponding $q$-measure by $\mu _n$. A set $A\in\ascript$ is \textit{suitable} if $\lim\mu _n(A^n)$ exists (and is finite) in which case we define $\mu (A)=\lim\mu _n(A^n)$. We denote the collection of suitable sets by $\sscript (\Omega )$. Of course, $\emptyset ,\Omega\in\sscript (\Omega )$ with $\mu (\emptyset )=0$,
$\mu (\Omega =1$. If $A\in\cscript (\Omega )$ and $A=\rmcyl (B)$ where $B\subseteq\Omega _m$, then it follows from consistency that $\lim\mu _n(A^n)=\mu _m(B)$. Hence, $A\in\sscript (\Omega )$ and $\mu (A)=\mu _m(B)$. We conclude that 
$\cscript (\Omega )\subseteq\sscript (\Omega )\subseteq\ascript$ and it can be shown that the inclusions are proper, in general. In a sense, $\mu$ is a $q$-measure on $\sscript (\Omega )$ that extends the $q$-measures $\mu _n$.

There are physically relevant sets that are not in $\cscript (\Omega )$. In this case, it is important to know whether such a set $A$ is in
$\sscript (\Omega )$ and if it is, to find $\mu (A)$. For example, if $\omega\in\Omega$ then
\begin{equation*}
\brac{\omega}=\bigcap _{n=1}^\infty\brac{\omega}^n\in\ascript
\end{equation*}
but $\brac{\omega}\notin\cscript (\Omega )$. As another example, the complement $\brac{\omega}'\notin\cscript (\Omega )$. Even if
$\brac{\omega}\in\sscript (\Omega )$, since $\mu _n(A')\ne 1-\mu _n(A)$ for $A\subseteq\Omega _n$ in general, it does not follow immediately that $\brac{\omega}'\in\sscript (\Omega )$. For this reason, we would have to treat $\brac{\omega}'$ as a separate case.

We saw in Section~2 that we can represent each element of $\pscript$ uniquely as $x_{n,\junderbar}$ where $n=\ab{x}$ and
$\junderbar$ can be considered as a binary number. We can also represent each element in $\pscript _{n+1}$ as a $n$-bit binary number $\junderbar =j_nj_{n-1}\cdots j_2j_1$, $j=0$ or $1$. Since $\Omega _n\approx\pscript _n$ we can also represent each
$\omega\in\Omega _{n+1}$ by an $n$-bit binary number $\junderbar$. The standard basis for $H_{n+1}=L_2(\Omega _{n+1})$ is the set of vectors $e_{\junderbar}=\chi _{\omega _{\junderbar}}$, $\omega _{\junderbar}\in\Omega _{n+1}$. We frequently use the notation
$\ket{\junderbar}=e_{\junderbar}$ which is called the \textit{computational basis} in quantum computation theory. In this theory
$\ket{\junderbar}$ is represented by
\begin{equation*}
\ket{\junderbar}=\ket{j_n\cdots j_2j_1}=\ket{j_n}\otimes\cdots\otimes\ket{j_2}\otimes\ket{j_1}
\end{equation*}
where $\ket{j_i}$ is $\ket{0}$ or $\ket{1}$ which form the basis of the two-dimensional Hilbert space $\complex ^2$.

The basis vectors $\ket{0}$ and $\ket{1}$ are called \textit{qubit states} but we shall call them \textit{qubits}, for short. We also call $\ket{\junderbar}$ given above, an $n$-\textit{qubit}. This is the quantum computation analogue of an $n$-bit of classical computer science. If $\rho _{n+1}$ is a probability operator, the corresponding \textit{decoherence matrix} is the $2^n\times 2^n$ complex matrix whose $\junderbar -\kunderbar$ component is given by
\begin{equation*}
M_{\rho _{n+1}}=\sqbrac{\elbows{\rho _n\ket{\kunderbar},\ket{\junderbar}}}
\end{equation*}
This is frequently shortened to
\begin{equation*}
M_{\rho _{n+1}}=\sqbrac{\elbows{\rho _n\kunderbar,\junderbar}}
\end{equation*}
but we shall not use this notation because it can be confusing. For $A,B\subseteq\Omega _{n+1}$ we form the superpositions
\begin{align*}
\ket{A}&=\sum\brac{\ket{\iunderbar}\colon\omega _{\iunderbar}\in A}\\
\ket{B}&=\sum\brac{\ket{\iunderbar}\colon\omega _{\iunderbar}\in B}
\end{align*}
The decoherence functional is now given by
\begin{equation*}
D_{\rho _{n+1}}(A,B)=\elbows{\rho _{n+1}\ket{B},\ket{A}}
\end{equation*}
Superpositions are a strictly quantum phenomenon that has no counterpart in classical computation.

An event $A\subseteq\Omega _n$ is \textit{precluded} if $\mu _n(A)=0$ \cite{sor94}.  Precluded events have been extensively studied in \cite{gtw9,hen09,sor03,sur11,wal13} and they are considered to be events that never occur. We shall give simple examples later which show that if $A$ is precluded and $B\subseteq A$ then $B$ need not be precluded. However, the following properties do hold.

\begin{thm}       
\label{thm31}
{\rm (a)}\enspace If $A\subseteq\Omega _n$ is precluded and $B\subseteq\Omega _n$ is disjoint from $A$ then
$\mu _n(A\cup B)=\mu _n (B)$.
{\rm (b)}\enspace If $A,B\subseteq\Omega _n$ are disjoint precluded events then $A\cup B$ is precluded.
\end{thm}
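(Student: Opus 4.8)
The plan is to argue directly from the definition $\mu_n(A)=D_{\rho_n}(A,A)=\elbows{\rho_n\chi_A,\chi_A}$, exploiting only that $\rho_n$ is a positive operator. The central observation I would isolate first is that the map $(f,g)\mapsto\elbows{\rho_n g,f}$ is a positive semidefinite sesquilinear form on $H_n$, because $\rho_n\ge 0$ admits a positive square root $\rho_n^{1/2}$ and $\elbows{\rho_n g,f}=\elbows{\rho_n^{1/2}g,\rho_n^{1/2}f}$. The Cauchy--Schwarz inequality for such a form then gives, for arbitrary $A,B\subseteq\Omega_n$,
\[
\ab{D_{\rho_n}(A,B)}^2\le\mu_n(A)\,\mu_n(B).
\]
The key consequence I would extract is that if $A$ is precluded, so $\mu_n(A)=0$, then $D_{\rho_n}(A,B)=0$ for \emph{every} $B\subseteq\Omega_n$. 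This single fact drives both parts.

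For part (a), I would use disjointness in the form $\chi_{A\cup B}=\chi_A+\chi_B$ and expand the sesquilinear form:
\[
\mu_n(A\cup B)=\elbows{\rho_n(\chi_A+\chi_B),\chi_A+\chi_B}=\mu_n(A)+\mu_n(B)+D_{\rho_n}(A,B)+D_{\rho_n}(B,A).
\]
Since a positive operator is self-adjoint, conjugate symmetry of the inner product gives $D_{\rho_n}(B,A)=\overline{D_{\rho_n}(A,B)}$, so the two cross terms combine to $2\,\rmre\,D_{\rho_n}(A,B)$. With $A$ precluded we have both $\mu_n(A)=0$ and, by the observation above, $D_{\rho_n}(A,B)=0$, leaving exactly $\mu_n(A\cup B)=\mu_n(B)$. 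Note that $B$ itself need not be precluded, which matches the asymmetry of the statement.

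Part (b) is then immediate: if $A$ and $B$ are disjoint and both precluded, applying part (a) yields $\mu_n(A\cup B)=\mu_n(B)=0$, so $A\cup B$ is precluded. (Equivalently, in the expansion above all three surviving terms vanish.)

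I expect no genuinely hard step here. The only point requiring care is justifying the Cauchy--Schwarz bound, which rests entirely on the positivity (hence self-adjointness and existence of a positive square root) of $\rho_n$; everything else is bookkeeping with characteristic functions of disjoint sets. It is worth remarking that the argument uses nothing about the growth-process structure of $\pscript$ beyond positivity of the probability operator, so both assertions hold verbatim for any probability operator on $H_n$.
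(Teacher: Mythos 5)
Your proof is correct and takes essentially the same route as the paper's: both expand $\elbows{\rho_n\chi_{A\cup B},\chi_{A\cup B}}$ via $\chi_{A\cup B}=\chi_A+\chi_B$ and kill the cross terms using the positive square root of $\rho_n$ --- the paper by observing that $\mu_n(A)=0$ forces $\rho_n^{1/2}\chi_A=0$ and hence $\rho_n\chi_A=0$, you by the equivalent Cauchy--Schwarz bound $\ab{D_{\rho_n}(A,B)}^2\le\mu_n(A)\mu_n(B)$. Part (b) is deduced from part (a) identically in both arguments.
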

\begin{proof}
(a)\enspace Since $\mu _n(A)=0$ we have that
\begin{equation*}
\|\rho _n^{1/2}\chi _A\|^2=\elbows{\rho _n^{1/2}\chi _A,\rho _n^{1/2}\chi _A}=\elbows{\rho _n\chi _A,\chi _A}=0
\end{equation*}
Hence, $\rho _n^{1/2}\chi _A=0$ so $\rho _n\chi _A=0$. Since $A\cap B=\emptyset$ we have that
\begin{align*}
\mu _n(A\cup B)&=\elbows{\rho _n\chi _{A\cup B},\chi _{A\cup B}}=\elbows{\rho _n(\chi _A+\chi _B),\chi _A+\chi _B}\\
  &=\elbows{\rho _n\chi _A,\chi _A}+2\rmre\elbows{\rho _n\chi _A,\chi _A}+\elbows{\rho _n\chi _B,\chi _B}\\
  &=\elbows{\rho _n\chi _B,\chi _B}=\mu _n(B)
\end{align*}
Part (b) follows from (a).
\end{proof}

An event $A\in\sscript (\Omega )$ is \textit{precluded} if $\mu (A)=0$ and $A$ is \textit{strongly precluded} if there exists an
$n\in\positive$ such that $\mu _m(A^m)=0$ for all $m\ge n$. For example, if $A=\rmcyl (B)$ where $B\subseteq\Omega _n$ and
$\mu _n(B)=0$ then $A$ is strongly precluded. Of course, strongly precluded events are precluded.

A precluded event is \textit{primitive} if it has no proper, nonempty precluded subsets.

\begin{thm}       
\label{thm32}
If $A\subseteq\Omega _n$ is precluded, then $A$ is primitive or $A$ is a union of mutually disjoint primitive precluded events.
\end{thm}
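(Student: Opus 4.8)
The plan is to argue by strong induction on the cardinality $\ab{A}$, the crucial ingredient being that within a precluded event the complement of any precluded subset is again precluded.

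First I would record this key observation. Suppose $A$ is precluded and $B\subsetneq A$ is a nonempty precluded subset. Since $A\setminus B$ is disjoint from $B$, Theorem~\ref{thm31}(a), applied with the precluded set taken to be $B$ and the disjoint set taken to be $A\setminus B$, yields $\mu_n\paren{B\cup(A\setminus B)}=\mu_n(A\setminus B)$, that is, $\mu_n(A)=\mu_n(A\setminus B)$. As $\mu_n(A)=0$, this forces $\mu_n(A\setminus B)=0$, so $A\setminus B$ is precluded. Thus any proper nonempty precluded subset of $A$ splits $A$ into two disjoint, strictly smaller precluded events.

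With this in hand the induction is routine. For the base case, if $\ab{A}\le 1$ then $A$ has no nonempty proper subset, hence no proper nonempty precluded subset, so $A$ is primitive. For the inductive step, assume the result for every precluded event of cardinality less than $\ab{A}$. If $A$ is primitive we are done. Otherwise $A$ has a nonempty proper precluded subset $B$, and by the observation above $A\setminus B$ is also a nonempty precluded event; moreover both $B$ and $A\setminus B$ have cardinality strictly smaller than $\ab{A}$. By the inductive hypothesis each of $B$ and $A\setminus B$ is a union of mutually disjoint primitive precluded events, where a primitive event is regarded as such a union with a single term. Since $B$ and $A\setminus B$ are disjoint, every primitive piece coming from $B$ is disjoint from every piece coming from $A\setminus B$, so $A=B\cup(A\setminus B)$ is itself a union of mutually disjoint primitive precluded events, completing the induction.

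I do not anticipate a genuine obstacle here; the one step that requires care is the deduction that $A\setminus B$ is precluded, since this is exactly where the grade-$2$, non-additive nature of $\mu_n$ could cause trouble. It is precisely Theorem~\ref{thm31}(a) that restores additivity in the presence of a precluded summand, and this is the crux of the argument. Everything else is bookkeeping, the only mild care being to note that the disjointness of $B$ and $A\setminus B$ guarantees that the primitive pieces assembled from the two halves remain mutually disjoint.
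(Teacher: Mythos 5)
Your proof is correct and takes essentially the same route as the paper: the crux in both is the application of Theorem~\ref{thm31}(a) to conclude that $A\setminus B$ (equivalently $A\cap B'$) is again precluded whenever $B\subseteq A$ is a precluded subset, with finiteness of $A$ doing the rest. The paper organizes the finite descent as an iterative extraction of primitive pieces rather than a strong induction on $\ab{A}$, but that is only a cosmetic difference.
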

\begin{proof}
If $A$ is primitive we are finished. Otherwise, there exists a proper, nonempty precluded subset $B\subseteq A$. Since $\ab{B}<\infty$ there exists a nonempty, primitive precluded event $A_1\subseteq B\subseteq A$. Applying Theorem~\ref{thm31}, we conclude that
$\mu _n(A\cap A'_1)=0$. In a similar way, there exists a nonempty, primitive precluded event $A_2\subseteq A\cap A'_1$. Of course, $A_1\cap A_2=\emptyset$. Continuing, this process must eventually stop and we obtain a sequence of mutually disjoint primitive preluded events $A_1,\ldots ,A_n$ with $A=\cup A_i$.
\end{proof}

\section{Covariant Amplitude Processes} 
This section considers a method of constructing a CQSGP called a covariant amplitude process. Not all CQSGPs can be constructed in this way, but this method appears to have physical motivation \cite{gud13}.

A \textit{transition amplitude} is a map $\atilde\colon\pscript\times\pscript\to\complex$ such that $\atilde (x,y)=0$ if $x\not\to y$ and
$\sum _y\atilde (x,y)=1$ for all $x\in\pscript$. This is similar to a Markov chain except $\atilde (x,y)$ may be complex. The
\textit{covariant amplitude process} (CAP) corresponding to $\atilde$ is given by the maps $a_n\colon\Omega _n\to\complex$ where
\begin{equation*}
a_n(\omega _1\omega _2\cdots\omega _n)
   =\atilde (\omega _1,\omega _2)\atilde (\omega _2,\omega _3)\cdots\atilde (\omega _{n-1},\omega _n)
\end{equation*}
We can consider $a_n$ to be a vector in $H_n=L_2(\Omega _n)=L_2(\pscript _n)$. Notice that for $x\in\pscript _n$ we can define $a_n(x)$ to be $a_n(\omega )$ where $\omega\in\Omega _n$ is the unique history of $x$. Observe that
\begin{equation*}
\elbows{1_n,a_n}=\sum _{\omega\in\Omega _n}a_n(\omega )=1
\end{equation*}
and we also have that
\begin{equation*}
\|a_n\|=\paren{\sum _{\omega\in\Omega _n}\ab{a_n(\omega )^2}}^{1/2}
\end{equation*}
Define the rank-1 positive operator $\rho _n=\ket{a_n}\bra{a_n}$ on $H_n$. The norm of $\rho _n$ is
\begin{equation*}
\|\rho _n\|=\|a_n\|^2=\sum _{\omega\in\Omega _n}\ab{a_n(\omega )}^2
\end{equation*}
Since $\elbows{\rho _n1_n,1_n}=\ab{\elbows{1_n,a_n}}^2=1$, we conclude that $\rho _n$ is a probability operator. It is shown in \cite{gud13} that $\brac{\rho _n}$ is consistent so $\brac{\rho _n}$ forms a CQSGP. We call $\brac{\rho _n}$ the CQSGP \textit{generated} by the CAP $\brac{a_n}$.

The decoherence functional corresponding to the CAP $\brac{a_n}$ becomes
\begin{align*}
D_n(A,B)&=\elbows{\rho _n\chi _B,\chi _A}=\elbows{\chi _B,a_n}\elbows{a_n,\chi _A}\\
  &=\sum _{\omega\in A}\overline{a_n(\omega )}\sum _{\omega\in B}a_n(\omega )
\end{align*}
In particular, for $\omega ,\omega '\in\Omega _n$ the decoherence matrix elements
\begin{equation*}
D_n(\omega ,\omega ')=\overline{a_n(\omega )}a_n(\omega ')
\end{equation*}
are the matrix elements of $\rho _n$ in the standard basis. The $q$-measure $\mu _n\colon 2^{\Omega _n}\to\real ^+$ is given by
\begin{equation*}
\mu _n(A)=D_n(A,A)=\ab{\sum _{\omega\in A}a_n(\omega )}^2
\end{equation*}
In particular, $\mu _n(\omega )=\ab{a_n(\omega )}^2$ for every $\omega\in\Omega _n$ and $\mu _n(\Omega _n)=1$. Of course, we also have that $\mu _n(x)=\ab{a_n(x)}^2$ for all $x\in\pscript _n$.

Since each $x\in\pscript _n$ has precisely two offspring, we can describe a transition amplitude $\atilde$ and the corresponding CAP
$\brac{a_n}$ in a simple way. Let
\begin{align*}
\atilde (x_{n,\junderbar},x_{n+1,\junderbar 0})&=c_{n,\junderbar}\\
\intertext{and}
\atilde (x_{n,\junderbar},x_{n+1,\junderbar 1})&=1-c_{n,\junderbar}
\end{align*}
$j=0,1,\ldots ,2^{n-1}-1$. We call the numbers $c_{n,\junderbar}\in\complex$ \textit{coupling constants} for the corresponding CAP
$\brac{a_n}$.
\begin{exam}{3}  
If the CAP $\brac{a_n}$ has coupling constants $c_{n,\junderbar}$, then we have $a_2(x_{2,0})=c_{1,0}$, $a_2(x_{2,1})=1-c_{1,0}$,
$a_3(x_{3,00})=c_{1,0}c_{2,0}$, $a_3(x_{3,01})=c_{1,0}(1-c_{2,0})$, $a_3(x_{3,10})=(1-c_{1,0})c_{2,1}$,
$a_3(x_{3,11})=(1-c_{1,0})(1-c_{2,1})$.
\end{exam}

We shall only need a special case of the next theorem but it still has independent interest.
\begin{thm}       
\label{thm41}
An operator $M$ on $H_n$ is a rank-1 probability operator if and only if $M$ has a matrix representation
$M=\sqbrac{\alphabar _j\alpha _k}$ where $\alpha _j\in\complex$, $j=1,\ldots ,n$, satisfy $\sum\alpha _j=1$.
\end{thm}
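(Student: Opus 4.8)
The plan is to invoke the standard structure theorem for positive rank-1 operators: on a finite-dimensional Hilbert space every such operator equals $\ket{v}\bra{v}$ for some nonzero vector $v$, and the (conjugated) components of $v$ will supply the scalars $\alpha_j$. The one genuinely delicate point is that the bilinear expression $\alphabar_j\alpha_k$ is invariant under a global phase rotation of the $\alpha_j$, and exploiting this invariance is exactly what converts a modulus condition into the stated normalization $\sum\alpha_j=1$.

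For the direction asserting that a rank-1 probability operator has the claimed form, I would begin by writing $M=\ket{v}\bra{v}$ with $v\ne 0$, which is possible because $M$ is positive of rank $1$ (its spectral decomposition has a single nonzero eigenvalue, which can be absorbed into $v$). Reading off the entries of $M$ in the standard basis and comparing with $\sqbrac{\alphabar_j\alpha_k}$ then determines scalars $\alpha_j$, namely the conjugated components of $v$. To handle the probability condition I would compute
\begin{equation*}
\elbows{M1_n,1_n}=\ab{\elbows{v,1_n}}^2=\ab{\sum_k\alpha_k}^2 ,
\end{equation*}
so that $\elbows{M1_n,1_n}=1$ is equivalent to $\ab{\sum_k\alpha_k}=1$. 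Finally I would observe that replacing $v$ by $e^{-i\theta}v$ (equivalently each $\alpha_j$ by $e^{i\theta}\alpha_j$) leaves every product $\alphabar_j\alpha_k$ unchanged while rotating the scalar $\sum_k\alpha_k$; choosing $\theta$ to make this sum real and nonnegative, and using that its modulus is $1$, upgrades the condition to $\sum_k\alpha_k=1$.

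For the converse, suppose $M=\sqbrac{\alphabar_j\alpha_k}$ with $\sum\alpha_j=1$. I would exhibit $M$ as $\ket{v}\bra{v}$ for the vector $v$ whose components are the $\alpha_j$ (up to conjugation); this is manifestly positive, and since $\sum\alpha_j=1$ forces $v\ne 0$, the operator has rank exactly $1$. The probability condition then follows from the same computation,
\begin{equation*}
\elbows{M1_n,1_n}=\ab{\sum_k\alpha_k}^2=1 .
\end{equation*}

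The step I expect to be the main obstacle is the phase normalization in the forward direction. The hypotheses alone pin $M$ down only as a positive rank-1 operator, and the naive computation delivers the weaker statement $\ab{\sum\alpha_j}=1$. Recognizing that $\alphabar_j\alpha_k$ is phase-invariant, and using that freedom to pass to $\sum\alpha_j=1$, is the only nonroutine ingredient; the positivity, the rank count, and the trace-type evaluation $\elbows{M1_n,1_n}=\ab{\sum\alpha_k}^2$ are all immediate once $M=\ket{v}\bra{v}$ is in hand.
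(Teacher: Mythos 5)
Your proposal is correct and matches the paper's own argument essentially step for step: both directions rest on writing $M=\ket{\psi}\bra{\psi}$, reading off the matrix entries as $\alphabar_j\alpha_k$, evaluating the probability condition as $\ab{\sum\alpha_j}^2=1$, and then using the phase freedom $\psi\mapsto e^{i\theta}\psi$ to upgrade $\ab{\sum\alpha_j}=1$ to $\sum\alpha_j=1$. The phase normalization you single out as the one nonroutine point is exactly the step the paper performs by choosing $\theta$ with $e^{-i\theta}\sum\alpha_j=1$.
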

\begin{proof}
Suppose $M=\sqbrac{M_{jk}}$ with $M_{jk}=\alphabar _j\alpha _k$ where $\sum\alpha _j=1$. Let $\psi\in\complex ^n$ be the vector
$\psi =(\alpha _1,\ldots ,\alpha _n)$. We have that $M=\ket{\psi}\bra{\psi}$ so $M$ is positive with rank 1. To show that $M$ is a probability operator we have
\begin{equation*}
\sum _{j,k}M_{jk}=\sum _{j,k}\alphabar _j\alpha _k=\ab{\sum\alpha _j}^2=1
\end{equation*}
Conversely, let $M$ be a rank-1 probability operator. Since $M$ is rank-1, it has the form $M=\ket{\psi}\bra{\psi}$ for some
$\psi\in\complex ^n$. We then have the matrix representation
\begin{equation*}
M=\sqbrac{\elbows{e_j,\psi}\elbows{\psi ,e_k}}
\end{equation*}
where $e_j$, $j=1,\ldots ,n$, is the standard basis for $\complex ^n$. Letting $\alpha _j=\elbows{\psi ,e_j}$ we conclude that
$M=\sqbrac{\alphabar _j\alpha _k}$. Since $M$ is a probability operator we have that
\begin{equation*}
\ab{\sum _j\alpha _j}^2=\sum _{j,k}\alphabar _j\alpha _k=1
\end{equation*}
Now there exists a $\theta\in\real$ such that $e^{-i\theta}\sum\alpha _j=1$. Letting $\psi '=e^{i\theta}\psi$ we obtain
\begin{equation*}
M=\ket{\psi '}\bra{\psi '}=\sqbrac{\alphabar '_j\alpha '_k}
\end{equation*}
where $\alpha '_j=e^{-i\theta}\alpha _j$, $j=1,\ldots ,n$. Hence, $\sum\alpha '_j=1$.
\end{proof}

An operator on $\complex ^2$ is called a \textit{qubit operator}. We shall only need the following corollary of Theorem~\ref{thm41}.

\begin{cor}       
\label{cor42}
A qubit operator $M$ is a rank-1 probability operator if and only if $M$ has a matrix representation
\begin{equation}         
\label{eq41}
M=\left[\begin{matrix}\noalign{\smallskip}\ab{c}^2&\cbar (1-c)\\c(1-\cbar )&\ab{1-c}^2\\\end{matrix}\right]
\end{equation}
where $\alpha\in\complex$.
\end{cor}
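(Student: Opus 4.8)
The plan is to deduce this corollary directly from Theorem~\ref{thm41} by specializing to $n=2$, since a qubit operator is by definition an operator on $\complex^2$. Theorem~\ref{thm41} then says that a rank-1 probability operator on $\complex^2$ is exactly one of the form $M=\sqbrac{\alphabar_j\alpha_k}$ with $\alpha_1,\alpha_2\in\complex$ subject to the single normalization $\alpha_1+\alpha_2=1$. The first step is simply to observe that this constraint leaves one free complex parameter, so I would set $c=\alpha_1$, which forces $\alpha_2=1-c$.

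For the forward implication I would then write out the four entries $M_{jk}=\alphabar_j\alpha_k$ under the substitution $\alpha_1=c$, $\alpha_2=1-c$. This yields $M_{11}=\ab{c}^2$, $M_{12}=\cbar(1-c)$, $M_{21}=(1-\cbar)c$, and $M_{22}=\ab{1-c}^2$, which is precisely the matrix~\eqref{eq41}. Hence every rank-1 probability qubit operator has the stated form. For the converse, I would start from an arbitrary matrix of the form~\eqref{eq41} with $c\in\complex$, set $\alpha_1=c$ and $\alpha_2=1-c$, note that $\alpha_1+\alpha_2=1$, and invoke the sufficiency half of Theorem~\ref{thm41} to conclude that $M$ is a rank-1 probability operator. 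Both directions thus reduce to substitution.

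I do not expect a genuine obstacle here: all the content lives in Theorem~\ref{thm41}, and the only real work is the notational identification $c=\alpha_1$ together with the observation that normalization collapses the two parameters $(\alpha_1,\alpha_2)$ to one. The single point needing a moment's care is the off-diagonal entry $M_{21}=\alphabar_2\,\alpha_1=\overline{(1-c)}\,c=(1-\cbar)c$, which one must check agrees with the displayed $c(1-\cbar)$; it does, by commutativity of complex multiplication.
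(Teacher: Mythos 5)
Your proof is correct and is exactly the argument the paper intends: the corollary is stated without proof precisely because it is the $n=2$ specialization of Theorem~\ref{thm41} with $\alpha_1=c$, $\alpha_2=1-c$, and your entry-by-entry verification (including the off-diagonal check $\alphabar_2\alpha_1=(1-\cbar)c=c(1-\cbar)$) fills that in faithfully. The only discrepancy is a typo in the paper's statement, which says ``where $\alpha\in\complex$'' when it should say ``where $c\in\complex$''; your reading is the right one.
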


A CAP $\brac{a_n}$ is \textit{stationary} if the coupling constants $c_{n,\junderbar}$ are independent of $j$. In this case we write
$c_{n,\junderbar}=c_n$ and we have 
$\atilde (x_{n,\junderbar},x_{n+1,\junderbar 0})=c_n,\atilde (x_{n,\junderbar},x_{n+1,\junderbar 1})=1-c_n$. By Corollary~\ref{cor42} the operators
\begin{equation*}
\chat _j=\left[\begin{matrix}\noalign{\smallskip}\ab{c_j}^2&\cbar _j(1-c_j)\\
\noalign{\smallskip}c_j(1-\cbar _j)&\ab{1-c_j}^2\\\end{matrix}\right]
\end{equation*}
are qubit rank-1 probability operators.

\begin{thm}       
\label{thm43}
Let $\brac{c_n}$ be the coupling constants for a stationary CAP. The generated CQSGP $\brac{\rho _n}$ has the form
\begin{equation}         
\label{eq42}
\rho _n=\chat _{n-1}\otimes\chat _{n-2}\otimes\cdots\otimes\chat _2\otimes\chat _1
\end{equation}
\end{thm}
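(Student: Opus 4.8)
The plan is to prove the factorization by induction on $n$, the engine being a simple recursion that expresses $a_{n+1}$ as a tensor product of $a_n$ with a single new qubit vector. First I would make the bookkeeping explicit: for $x_{n+1,\junderbar}\in\pscript_{n+1}$ with $\junderbar=j_nj_{n-1}\cdots j_1$, the unique producer is $x_{n,j_nj_{n-1}\cdots j_2}$ and the appended bit $j_1$ records which offspring was taken. Since the process is stationary, the transition amplitude into $x_{n+1,\junderbar}$ is $\atilde(\omega_n,\omega_{n+1})=c_n$ when $j_1=0$ and $1-c_n$ when $j_1=1$. Because $a_{n+1}(\omega_1\cdots\omega_{n+1})=a_n(\omega_1\cdots\omega_n)\,\atilde(\omega_n,\omega_{n+1})$ and an $n$-path is determined by its last vertex, this yields the key identity
\[
a_{n+1}(x_{n+1,j_n\cdots j_1})=a_n(x_{n,j_n\cdots j_2})\cdot\big[c_n\delta_{j_1,0}+(1-c_n)\delta_{j_1,1}\big].
\]

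Reading this identity at the level of vectors is the crux. Splitting off the newly created (rightmost) bit gives $H_{n+1}=H_n\otimes\complex^2$, and the right-hand side factors the dependence on $j_1$ away from the dependence on the remaining bits, so $a_{n+1}=a_n\otimes v_n$, where $v_n=c_n\ket{0}+(1-c_n)\ket{1}$ lives in the new qubit slot. I would then invoke the elementary rank-one identity $\ket{u\otimes w}\bra{u\otimes w}=\ket{u}\bra{u}\otimes\ket{w}\bra{w}$ to conclude $\rho_{n+1}=\ket{a_{n+1}}\bra{a_{n+1}}=\rho_n\otimes\ket{v_n}\bra{v_n}$.

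It remains to identify $\ket{v_n}\bra{v_n}$ with $\chat_n$. Writing out $\ket{v_n}\bra{v_n}$ in the computational basis, its entries are $\overline{(v_n)_a}(v_n)_b$, which are exactly the entries of the matrix in Corollary~\ref{cor42} with $c=c_n$; hence $\ket{v_n}\bra{v_n}=\chat_n$, and the recursion becomes $\rho_{n+1}=\rho_n\otimes\chat_n$. The base case is $\rho_2=\ket{v_1}\bra{v_1}=\chat_1$, since $a_2=(c_1,1-c_1)$. Unrolling the recursion then delivers the asserted tensor product of the $\chat_j$, and the probability-operator normalization $\elbows{\rho_n 1_n,1_n}=1$ follows automatically since each $\chat_j$ is a qubit probability operator.

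The step I expect to require the most care is not analytic but combinatorial: keeping the correspondence between bit positions in $\junderbar$ and tensor slots consistent. The appended offspring bit always lands in the rightmost position of $\junderbar=j_n\cdots j_1$, so with the convention $\ket{\junderbar}=\ket{j_n}\otimes\cdots\otimes\ket{j_1}$ the recursion attaches each new factor $\chat_n$ in the newly created (rightmost) slot and the factors accumulate as $\chat_1\otimes\cdots\otimes\chat_{n-1}$; reconciling this with the order $\chat_{n-1}\otimes\cdots\otimes\chat_1$ displayed in \eqref{eq42} is a matter of the slot-numbering convention (reversing all factors corresponds to relabeling the qubits), and I would state that convention explicitly to forestall an apparent reversal.
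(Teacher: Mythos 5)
Your proof is correct and takes essentially the same route as the paper's: the paper verifies $\rho_2=\chat_1$, writes out the $4\times 4$ matrix of $\rho_3$ explicitly to exhibit the block structure, and then appeals to induction, and the inductive step it leaves implicit is exactly the recursion $a_{n+1}=a_n\otimes v_n$, $\rho_{n+1}=\rho_n\otimes\ket{v_n}\bra{v_n}=\rho_n\otimes\chat_n$ that you make explicit. Your caveat about factor ordering is also apt: the paper's own displayed matrix for $\rho_3$ has block form $\bigl[(\chat_1)_{jk}\,\chat_2\bigr]$, which under the standard Kronecker convention is $\chat_1\otimes\chat_2$, so the order written in \eqref{eq42} must be read with the reversed slot-labeling convention you identify.
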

\begin{proof}
Since $\Omega _n\approx\pscript _n$ we can write
\begin{equation*}
\rho _2=D_2=\left[\begin{matrix}\noalign{\smallskip}\ab{c_1}^2&\cbar _1(1-c_1)\\
\noalign{\smallskip}c_1(1-\cbar _1)&\ab{1-c_1}^2\\\end{matrix}\right]=\chat _1
\end{equation*}
At the next step we apply Example~3 to obtain
\begin{align*}
&\rho _3=D_3\\
&=\left[\begin{matrix}\noalign{\smallskip}
\ab{c_1}^2\ab{c_2}^2&\ab{c_1}^2\cbar _2(1-c_2)&\cbar _1(1-c_1)\ab{c_2}^2&\cbar _1(1-c_1)\cbar _2(1-c_2)\\
\noalign{\smallskip}\ab{c_1}^2c_2(1-\cbar _2)&\ab{c_1}^2\ab{1-c_2}^2&\cbar _1(1-c_1)c_2(1-\cbar _2)&\cbar _1(1-c_1)\ab{1-c_2}^2\\
\noalign{\smallskip}c_1(1-\cbar _1)\ab{c_2}^2&c_1(1-\cbar _1)\cbar _2(1-c_2)&\ab{1-c_1}^2\ab{c_2}^2&\ab{1-c_1}^2\cbar _2(1-c_2)\\
\noalign{\smallskip}c_1(1-\cbar _1)c_2(1-\cbar _2)&c_1(1-\cbar _1)\ab{1-c_2}^2&\ab{1-c_1}^2c_2(1-\cbar _2)
  &\ab{1-c_1}^2\ab{1-c_2}^2\\\end{matrix}\right]\\\noalign{\smallskip}
&=\left[\begin{matrix}\noalign{\smallskip}\ab{c_1}^2\chat _2&\cbar _1(1-c_1)\chat _2\\
\noalign{\smallskip}c_1(1-\cbar _1)\chat _2&\ab{1-c_1}^2\chat _2\\\end{matrix}\right]=\chat _2\otimes\chat _1
\end{align*}
Continuing by induction, we have that \eqref{eq42} holds.
\end{proof}

Equation~\eqref{eq42} shows that the $(n-1)$-qubit probability operator $\rho _n$ is the tensor product of $n-1$ qubit probability operators. The next result show that the converse of Theorem~\ref{thm43} holds.

\begin{thm}       
\label{thm44}
If the CQSGP $\brac{\rho _n}$ has the form
\begin{equation*}
\rho _n=\beta _{n-1}\otimes\beta _{n-2}\otimes\cdots\otimes\beta _2\otimes\beta _1
\end{equation*}
where $\beta _j$ is a rank-1 probability operator, then $\brac{\rho _n}$ is generated by a stationary CAP.
\end{thm}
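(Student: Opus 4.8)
The plan is to read the coupling constants off the tensor factors and then quote Theorem~\ref{thm43}. Since $\rho _n$ acts on $H_n$, which has dimension $2^{n-1}$, and is a tensor product of $n-1$ factors, each $\beta _j$ must be an operator on $\complex ^2$, that is, a rank-1 \emph{qubit} probability operator. Thus Corollary~\ref{cor42} applies to every factor, and I would first use it to obtain, for each $j$, a coupling constant $c_j\in\complex$ with
\begin{equation*}
\beta _j=\chat _j=\left[\begin{matrix}\ab{c_j}^2&\cbar _j(1-c_j)\\c_j(1-\cbar _j)&\ab{1-c_j}^2\end{matrix}\right].
\end{equation*}

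Next I would assemble these $c_j$ into a stationary transition amplitude exactly as in Section~4: put $\atilde (x_{n,\junderbar},x_{n+1,\junderbar 0})=c_n$ and $\atilde (x_{n,\junderbar},x_{n+1,\junderbar 1})=1-c_n$ for all $n$ and $\junderbar$, with $\atilde (x,y)=0$ when $x\not\to y$. By Theorem~\ref{thm22} each $x\in\pscript _n$ has exactly these two offspring, so $\sum _y\atilde (x,y)=c_n+(1-c_n)=1$ and $\atilde$ is a bona fide transition amplitude; since the $c_n$ are independent of $\junderbar$, the associated CAP $\brac{a_n}$ is stationary. Theorem~\ref{thm43} then says the CQSGP $\brac{\rho '_n}$ generated by this stationary CAP satisfies
\begin{equation*}
\rho '_n=\chat _{n-1}\otimes\cdots\otimes\chat _1=\beta _{n-1}\otimes\cdots\otimes\beta _1=\rho _n ,
\end{equation*}
which is the desired conclusion.

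The step that genuinely uses the hypothesis that $\brac{\rho _n}$ is \emph{already} a CQSGP --- and the point I expect to be the only real obstacle --- is verifying that the factor attached to a given growth step is the same operator in every $\rho _n$ in which it appears, so that a single $c_j$ is well defined and matches the output of Theorem~\ref{thm43}. Here I would invoke consistency. Writing $D_{\rho _{n+1}}(A\to ,B\to )=D_{\rho _n}(A,B)$ for singletons $A,B$, the two offspring of each element mean the left-hand side sums $(\rho _{n+1})$ over both labels of the most recent qubit; since each factor $\beta$ is a probability operator, $\sum _{k,k'}\beta _{kk'}=\elbows{\beta 1,1}=1$, so that qubit collapses and the remaining factors of $\rho _{n+1}$ must reproduce $\rho _n$. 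Uniqueness of the rank-1 tensor factorization then identifies the step-$j$ factor across levels. Once this bookkeeping (and the matching of the tensor-slot and bit conventions) is settled, Corollary~\ref{cor42} and Theorem~\ref{thm43} close the argument at once.
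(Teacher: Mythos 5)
Your proposal is correct and follows essentially the same route as the paper: apply Corollary~\ref{cor42} to each tensor factor to extract the coupling constants $c_j$, then observe (via the computation in Theorem~\ref{thm43}) that the stationary CAP with these constants generates exactly the given sequence. The paper's own proof is just a terser version of this, and your additional remarks on consistency and on why each factor must be a qubit operator only flesh out details the paper leaves implicit.
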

\begin{proof}
Since $\beta _j$, $j=1,\ldots ,n-1$, is a rank-1 qubit probability operator, by Corollary~\ref{cor42} we have that
\begin{equation*}
\beta _j=\left[\begin{matrix}\noalign{\smallskip}\ab{c_j}^2&\cbar _j(1-c_j)\\
\noalign{\smallskip}c_j(1-\cbar _j)&\ab{1-c_j}^2\\\end{matrix}\right]
\end{equation*}
where $c _j\in\complex$. As in the proof of Theorem~\ref{thm43}, $\brac{\rho _n}$ is generated by a stationary CAP whose coupling constants are $\brac{c_n}$
\end{proof}

We say that a CAP is \textit{completely stationary} if the coupling constants $c_{n,\junderbar}$ are independent of $n$ and
$\junderbar$. In this case, we have a single coupling constant $c\in\complex$ and the generated CQSGP $\brac{\rho _n}$ has the form
\begin{equation*}
\rho _n=\chat\otimes\cdots\otimes\chat = \bigotimes _1^{n-1}\chat
\end{equation*}
where $\chat$ has the form \eqref{eq41}.

\section{Examples of $Q$-Measures} 
In this section we compute some simple examples of $q$-measures in the stationary case. Let $\brac{a_n}$ be a stationary CAP with corresponding coupling constants $\brac{c_n}$. As usual, we can identify $\Omega _n$ with $\pscript _n$. If
$\omega =\omega _1\cdots\omega _n\in\Omega _n$ we have that $\mu _n(\omega )=\mu _n(\omega _n)$. For
$\pscript _2=\brac{x_{2,0},x_{2,1}}$ we have $a_2(x_{2,0})=c_1$, $a_2(x_{2,1})=1-c_1$, so $\mu _2(x_{2,0})=\ab{c_1}^2$ and
$\mu _2(x_{2,1})=\ab{1-c_1}^2$. For
\begin{equation*}
\pscript _3=\brac{x_{3,0},x_{3,1},x_{3,2},x_{3,3}}
\end{equation*}
we have $a_3(x_{3,0})=c_1c_2$, $a_3(x_{3,1})=c_1(1-c_2)$, $a_3(x_{3,2})=(1-c_1)c_2$, $a_3(x_{3,3})=(1-c_1)(1-c_2)$. Hence, 
$\mu _3(x_{3,0})=\ab{c_1}^2\ab{c_2}^2$, $\mu _3(x_{3,1})=\ab{c_1}^2\ab{1-c_2}^2$, $\mu _3(x_{3,2})=\ab{1-c_1}^2\ab{c_2}^2$ and
$\mu _3(x_{3,3})=\ab{1-c_1}^2\ab{1-c_2}^2$. We now compute the $q$-measure of some two element sets. We have that
\begin{equation*}
\mu _3\paren{\brac{x_{3,0},x_{3,1}}}=\ab{a_3(x_{3,0})+a_3(x_{3,1})}^2=\ab{c_1}^2
\end{equation*}
Since $\mu _3\paren{\brac{x_{3,0},x_{3,1}}}\ne\mu _3(x_{3,0})+\mu _3(x_{3,1})$ in general, we conclude that $x_{3,0}$ and $x_{3,1}$ interfere with other, except in special cases. If
\begin{equation*}
\mu _3\paren{\brac{x_{3,0},x_{3,1}}}<\mu _3(x_{3,0})+\mu _3(x_{3,1})
\end{equation*}
we say that $x_{3,0}$ and $x_{3,1}$ interfere \textit{destructively} and if
\begin{equation*}
\mu _3\paren{\brac{x_{3,0},x_{3,1}}}>\mu _3(x_{3,0})+\mu _3(x_{3,1})
\end{equation*}
we say that $x_{3,0}$ and $x_{3,1}$ interfere \textit{constructively}. The three possible cases, $=,<,>$ can occur depending on the value of $c_2$. In a similar way, we have that $\mu\paren{\brac{x_{3,0},x_{3,2}}}=\ab{c_2}^2$
\begin{align*}
\mu _3\paren{\brac{x_{3,0},x_{3,3}}}&=\ab{1-c_1-c_2+2c_1c_2}^2\\
\mu _3\paren{\brac{x_{3,1},x_{3,2}}}&=\ab{c_1+c_2-2c_1c_2}^2\\
\mu _3\paren{\brac{x_{3,1},x_{3,3}}}&=\ab{1-c_2}^2
\end{align*}
It follows that any pair of elements of $\pscript _3$ interfere, in general. Finally, we compute the $q$-measures of some three element sets:
\begin{align*}
\mu\paren{\brac{x_{3,0},x_{3,1},x_{3,2}}}&=\ab{c_1+c_2-2c_1c_2}^2\\
\mu _3\paren{\brac{x_{3,0},x_{3,1},x_{3,3}}}&=\ab{1-c_2+2c_1c_2}^2
\end{align*}

We now consider
\begin{equation*}
\pscript _4=\brac{x_{4,0},x_{4,1},x_{4,2},x_{4,3},x_{4,4},x_{4,5},x_{4,6},x_{4,7}}
\end{equation*}
In this case we have $a_4(x_{4,0})=c_1c_2c_3$, $a_4(x_{4,1})=c_1c_2(1-c_3)$, $a_4(x_{4,2})=c_1(1-c_2)c_3$,
$a_4(x_{4,3})=c_1(1-c_2)(1-c_3)$, $a_4(x_{4,4})=(1-c_1)c_2c_3$, $a_4(x_{4,5})=(1-c_1)c_2(1-c_3)$,
$a_4(x_{4,6})=(1-c_1)(1-c_2)c_3$, $a_4(x_{4,7})=(1-c_1)(1-c_2)(1-c_3)$. We then have that $\mu _4(x_{4,j})=\ab{a_4(x_{4,j})}^2$, $j=0,1,\ldots ,7$. In general, the pattern is clear that
\begin{equation*}
\mu _n(x_{n,j})=\ab{c'_1}^2\ab{c'_2}^2\cdots\ab{c'_{n-1}}^2
\end{equation*}
where $c'_i=c_i$ if the history of $x_{n,j}$ turns ``left'' at the $i$th step and $c'_i=1-c_i$ if it turns ``right'' at the $i$th step. Some
$q$-measures of two element sets are
\begin{align*}
\mu _4\paren{\brac{x_{4,0},x_{4,1}}}&=\ab{c_1}^2\ab{c_2}^2\\
\mu _4\paren{\brac{x_{4,1},x_{4,2}}}&=\ab{c_1}^2\ab{c_2+c_3}^2
\end{align*}
In general, any pair of $c$-causets in $\pscript _4$ interfere.

We now consider the \textit{extremal left path} $\omega _\ell =x_{1,0}x_{2,0}x_{3,0}\cdots\,$.
Is $\brac{\omega _\ell}\in\sscript (\Omega )$? We have that
\begin{equation*}
\mu _n(x_{n,0})=\ab{c'_1}^2\ab{c'_2}^2\cdots\ab{c'_{n-1}}^2
\end{equation*}
Now $\brac{\omega _\ell}\in\sscript (\Omega )$ if and only if $\lim\mu _n(x_{n,0})$ exists and this depends on the values of $c_n$. In fact, we can set values of $c_n$ so that $\lim\mu _n(x_{n,0})=r$ for an $r\in\real ^+$. For example, if we let $c_n=c_{n+1}=\cdots =1$, then we obtain
\begin{equation*}
\mu (\omega _\ell)=\lim\mu _m(x_{m,0})=\ab{c'_1}^2\ab{c'_2}^2\cdots\ab{c'_{n-1}}^2
\end{equation*}
Moreover, in this case $\brac{\omega}\in\sscript (\Omega )$ for every $\omega\in\Omega$ with similar values for $\mu (\omega )$.

As another example, let $A\subseteq\Omega$ be the set of paths $\omega =\omega _1\omega _2\cdots$ such that
$\omega _3,\omega _4,\cdots$ are the ``middle half'' of $\pscript _3,\pscript _4,\ldots\,$. That is, 
$A^1=\pscript _1$, $A^2=\pscript$, $A^3=\brac{x_{3,1},x_{3,2}}$,
\begin{align*}
A^4&=\brac{x_{4,2},x_{4,3}x_{4,4},x_{4,5}}\\
A^5&=\brac{x_{5,4},x_{5,5}x_{5,6},x_{5,7}x_{5,8}x_{5,9}x_{5,10}x_{5,11}}\\
\vdots\ \ &
\end{align*}
Now $\mu _1(A^1)=\mu _2(A^2)=1$, $\mu _3(A^3)=\ab{c_1+c_2-2c_1c_2}^2$
\begin{align*}
\mu _4(A^4)&=\ab{c_1(1-c_2)c_3+c_1(1-c_2)(1-c_3)+(1-c_1)c_2c_3+(1-c_1)c_2(1-c_3)}^2\\
&=\ab{c_1(1-c_2)+(1-c_1)c_2}^2=\ab{c_1+c_2-2c_1c_2}^2
\end{align*}
It is not a coincidence that $\mu _4(A^4)=\mu _3(A^3)$. In fact, $A^4=(A^3\to )$ and $A=\rmcyl (A^3)$. It follows that $A\in\cscript (\Omega )$ so $A\in\sscript (\Omega )$ with $\mu (A)=\mu _3(A^3)$. In a similar way $A'\in\sscript (\Omega )$ with
$\mu (A')=\ab{1-c_1-c_2+2_1c_2}^2$. We can interpret $A'$ as the ``one fourth end paths'' with $A'^{n}=(A^{n})'$, $n=3,4,\ldots\,$.

The situation for noncylinder sets is more complicated so to simplify matters we consider a completely stationary CAP. In this case we have only one coupling constant $c$. For $x\in\pscript _n$ we have that $a_n(x)=c^j(1-c)^k$ where $j+k=n-1$, $j$ is the number of ``left turns'' and $k$ is the number of ``right turns.'' We then have explicitly that 
\begin{align*}
\mu _n(\pscript _n)=\ab{\sum _{x\in\pscript _n}a_n(x)}^2
  &=\ab{\sum _{j=0}^{n-1}\begin{pmatrix}n-1\\j\end{pmatrix}c^j(1-c)^{(n-1-j)}}^2\\
  &=\ab{(c+1-c)^{n-1}}^2=1
\end{align*}
The $q$-measure of $x\in\pscript _n$ becomes
\begin{equation*}
\mu _n(x)=\ab{c^j(1-c)^k}^2=\ab{c}^{2j}\ab{1-c}^{2k}
\end{equation*}
It is interesting that in this case we have
\begin{align*}
\sum _{x\in\pscript _n}\mu _n(x)&=\sum _{j+0}^{n-`}\begin{pmatrix}n-1\\j\end{pmatrix}\paren{\ab{c}^2}^j\paren{\ab{1-c}^2)^{(n-1-j)}}\\
  &=\paren{\ab{c}^2+\ab{1-c}^2}^{n-1}
\end{align*}
If $\omega =\omega _1\omega _2\cdots\in\Omega$, then
\begin{equation*}
\mu _n\paren{\brac{\omega}^n}=\mu _n(\omega _n)=\ab{c}^{2j}\ab{1-c}^{2k}
\end{equation*}
Whether $\lim\mu _n(\omega _n)$ exists or not depends on $c$. If $\ab{c},\ab{1-c}<1$ then $\brac{\omega}\in\sscript (\Omega )$ for every $\omega\in\Omega$ and $\mu (\omega )=0$. If $\ab{c},\ab{1-c}>1$, then $\brac{\omega}\notin\sscript (\Omega )$ for every
$\omega\in\Omega$. If $\ab{c}<1$, $\ab{1-c}>1$ or vice versa, then $\brac{\omega}\in\sscript (\Omega )$ for some $\omega\in\Omega$ and $\brac{\omega}\notin\sscript (\Omega )$ for others. Except for the trivial cases $\ab{c}=1$ or $\ab{1-c}=1$ we have that
$\mu (\brac{\omega})=0$ whenever $\ab{\omega}\in\sscript (\Omega )$. An interesting example of a set $B\notin\cscript (\Omega )$ is
\begin{equation*}
B=\brac{\omega _1\omega _2\cdots\in\Omega\colon\omega _j\hbox{ is connected }j\in\positive}
\end{equation*}
Thus, $B=\brac{\omega _\ell}'$ where $\omega _\ell$ is the extremal left path. Then $B\notin\cscript (\Omega)$ and
$\mu _n(B^n)=\ab{1-c^{n-1}}^2$. If $\ab{c}<1$, then $\lim\mu _n(B^n)=1$ so $B\in\sscript (\Omega )$ with $\mu (B)=1$.

As a special case, let $\brac{a_n}$ be a completely stationary CAP with coupling constant $c=\frac{1}{2}+\frac{i}{2}$. This is probably the simplest nontrivial coupling constant. Notice that $1-c=\frac{1}{2}-\frac{i}{2}=\cbar$ and $\ab{c}=\ab{\cbar}=1/\sqrt{2}$. Moreover
\begin{equation*}
c=\frac{1}{\sqrt{2}}e^{i\pi /4},\quad\cbar =\frac{1}{\sqrt{2}}e^{-i\pi /4}
\end{equation*}
For $x\in\pscript _n$ we have that $\mu _n(x)=1/2^{n-1}$. It follows that $\brac{\omega}\in\sscript (\Omega )$ for every $\omega\in\Omega$ and $\mu (\omega )=0$. In a similar way, if $A\subseteq\Omega$ is finite, then $A\in\sscript (\Omega )$ and $\mu (A)=0$. Moreover, $A'\in\sscript (\Omega )$ and $\mu (A')=1$. In $\pscript _3$ we have that
\begin{equation*}
\mu _3\paren{\brac{x_{3,0},x_{3,1}}}=\ab{c}^2=\tfrac{1}{2}=\mu _3(x_{3,0})+\mu _3(x_{3,1})
\end{equation*}
so in this case $x_{3,0}$ and $x_{3,1}$ do not interfere. In a similar way, $\mu _3\paren{\brac{x_{3,0},x_{3,2}}}=1/2$ so $x_{3,0}$ and $x_{3,2}$ do not interfere. On the other hand,
\begin{equation*}
\mu _3\paren{\brac{x_{3,0},x_{3,3}}}=\ab{1-2c-2c^2}^2=0
\end{equation*}
so $x_{3,0}$ and $x_{3,3}$ interfere destructively. Also,
\begin{equation*}
\mu _3\paren{\brac{x_{3,1},x_{3,2}}}=\ab{2c-2c^2}^2=4\ab{c(1-c)}^2=1
\end{equation*}
so $x_{3,1}$ and $x_{3,2}$ interfere constructively. Even in this simple case we can get strange results:
\begin{equation*}
\mu _3\paren{\brac{x_{3,0},x_{3,1},x_{3,2}}}=\ab{2c-2c^2}^2=\frac{5}{4}
\end{equation*}
We can check grade-2 additivity:
\begin{align*}
\frac{5}{4}&=\mu _3\paren{\brac{x_{3,0},x_{3,1},x_{3,2}}}\\
  &=\mu _3\paren{\brac{x_{3,0},x_{3,1}}}+\mu _3\paren{{x_{3,0},x_{3,2}}}+\mu _3\paren{\brac{x_{3,1},x_{3,2}}}\\
  &\quad -\mu _3(x_{3,0})-\mu _3(x_{3,1})-\mu _3(x_{3,2})=\frac{1}{2}+\frac{1}{2}+1-\frac{3}{4}
\end{align*}
An interesting property of this special case is that the probability operators $\rho _n=\rho _2\otimes\cdots\otimes\rho _2$ are closely related to the Pauli spin operator
\begin{equation*}
\sigma _y=\left[\begin{matrix}\noalign{\smallskip}0&-i\\
\noalign{\smallskip}i&0\\\end{matrix}\right]
\end{equation*}
In particular, for $c=\frac{1}{2}+\frac{i}{2}$ we have
\begin{equation*}
\rho _2=\left[\begin{matrix}\noalign{\smallskip}\ab{c}^2&\cbar (1-c)\\
\noalign{\smallskip}c(1-\cbar )&\ab{1-c}^2\\\end{matrix}\right]
=\frac{1}{2}\left[\begin{matrix}\noalign{\smallskip}1&-i\\\noalign{\smallskip}i&1\\\end{matrix}\right]
=\frac{1}{2}\,(I+\sigma _y)
\end{equation*}
In this way, $\rho _n$ corresponds to a state for $(n-1)$ spin-$\frac{1}{2}$ particles.

We now consider precluded events for the CAP we are discussing. We say that $x_{n,j},x_{n,k}\in\pscript _n$ are an \textit{antipodal pair} if $a_n(x_{n,j})=-a_n(x_{n,k})$. Since $a_n(x_{n,m})=c^j\cbar ^{\,k}$, $j+k=n-1$, we have that
\begin{equation*}
a_n(x_{n,m})=2^{(n-1)/2}e^{ir\pi /4}
\end{equation*}
for sone $r\in\brac{0,1,\ldots ,7}$. It follows that $x_{n,j}$ and $x_{n,k}$ are an antipodal pair if and only if
\begin{equation*}
a_n(x_{n,j})=2^{(n-1)/2}e^{ir\pi /4}=-a_n(x_{n,k})
\end{equation*}
for some $r\in\brac{0,1,\ldots ,7}$. We leave the proof of the following result to the reader. As usual we apply the identity
$\Omega _n\approx\pscript _n$.
 
\begin{thm}       
\label{thm51}
A set $A\subseteq\Omega _n$ is a nonempty, primitive precluded event if and only if $A=\brac{x_{n,j},x_{n,k}}$ where $x_{n,j}$ and $x_{n,k}$ are an antipodal pair.
\end{thm}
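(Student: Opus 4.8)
The plan is to exploit the fact that for this completely stationary CAP every amplitude is a fixed positive scalar times an eighth root of unity: writing $\zeta = e^{i\pi /4}$, we have $a_n(x_{n,m}) = 2^{-(n-1)/2}\zeta ^{\,r}$ for some $r\in\brac{0,1,\ldots ,7}$. Since $\mu _n(A)=\ab{\sum _{x\in A}a_n(x)}^2$, the preclusion condition $\mu _n(A)=0$ becomes a vanishing sum of eighth roots of unity, and the whole theorem reduces to identifying the \emph{minimal} such vanishing sums.

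For the forward implication, suppose $A=\brac{x_{n,j},x_{n,k}}$ is an antipodal pair. Then $\sum _{x\in A}a_n(x)=a_n(x_{n,j})+a_n(x_{n,k})=0$, so $A$ is precluded. Every singleton $\brac{x}$ satisfies $\mu _n(\brac{x})=\ab{a_n(x)}^2=2^{-(n-1)}\neq 0$, hence no singleton is precluded; as the only proper nonempty subsets of a two-element set are singletons, $A$ has no proper nonempty precluded subset and is therefore primitive.

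For the converse, let $A$ be a nonempty primitive precluded event, and for $r=0,\ldots ,7$ let $m_r$ count the $x\in A$ with $a_n(x)=2^{-(n-1)/2}\zeta ^{\,r}$. Preclusion reads $\sum _{r=0}^{7}m_r\zeta ^{\,r}=0$. Using $\zeta ^{\,r+4}=-\zeta ^{\,r}$ this collapses to $(m_0-m_4)+(m_1-m_5)\zeta +(m_2-m_6)\zeta ^2+(m_3-m_7)\zeta ^3=0$, and I would then invoke the linear independence of $1,\zeta ,\zeta ^2,\zeta ^3$ over $\mathbb{Q}$ (equivalently, that $\zeta$ has minimal polynomial $\Phi _8(x)=x^4+1$ of degree $4$) to conclude $m_r=m_{r+4}$ for $r=0,1,2,3$. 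Since $A$ is nonempty, some $m_r\geq 1$ with $r\in\brac{0,1,2,3}$, whence $m_{r+4}=m_r\geq 1$ as well; picking $x\in A$ of phase $r$ and $y\in A$ of phase $r+4$ gives a genuine antipodal pair $\brac{x,y}\subseteq A$ (genuine since the phases differ modulo $8$), which is itself precluded by the forward implication. Primitivity forbids $\brac{x,y}$ from being a proper subset, so $A=\brac{x,y}$ is an antipodal pair.

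The main obstacle is the rigidity statement in the converse: ruling out every vanishing sum of eighth roots of unity except antipodal cancellation. The cleanest route is the $\mathbb{Q}$-linear independence of $\brac{1,\zeta ,\zeta ^2,\zeta ^3}$; a hands-on alternative, since $\zeta =\tfrac{1}{\sqrt 2}(1+i)$, is to separate real and imaginary parts and use the irrationality of $\sqrt 2$ to force $m_0=m_4,\ m_1=m_5,\ m_2=m_6,\ m_3=m_7$. Once the matched multiplicities are established, primitivity immediately pins $\ab{A}=2$, so no further case analysis is needed.
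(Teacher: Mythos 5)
Your proof is correct. The paper explicitly leaves the proof of Theorem~5.1 to the reader, so there is no argument of record to compare against; your write-up supplies the missing justification. The forward direction (an antipodal pair sums to zero, and no singleton is precluded since $\mu_n(\{x\})=2^{-(n-1)}\ne 0$) is exactly right, and the one step that genuinely needed care in the converse --- that the only vanishing sums $\sum_{r=0}^{7}m_r\zeta^r=0$ with nonnegative integer multiplicities are those with matched antipodal counts $m_r=m_{r+4}$ --- is handled soundly via the $\mathbb{Q}$-linear independence of $1,\zeta,\zeta^2,\zeta^3$ (minimal polynomial $x^4+1$ of the primitive eighth root $\zeta=e^{i\pi/4}$), or equivalently your real/imaginary-part argument using the irrationality of $\sqrt{2}$. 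Extracting an antipodal pair inside $A$ and invoking primitivity to force $|A|=2$ closes the argument cleanly. As a side note, you implicitly correct the paper's typo $a_n(x_{n,m})=2^{(n-1)/2}e^{ir\pi/4}$, which should read $2^{-(n-1)/2}e^{ir\pi/4}$; the sign of the exponent is immaterial to the theorem since only the common modulus and the phases matter.
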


Applying Theorems~\ref{thm51} and \ref{thm32} we obtain:
\begin{cor}       
\label{cor52}
A set $A\subseteq\Omega _n$ is precluded if and only if $A$ is a disjoint union of antipodal pairs.
\end{cor}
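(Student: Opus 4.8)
The plan is to read the corollary off directly from Theorem~\ref{thm51} (which identifies primitive precluded events with antipodal pairs) and Theorem~\ref{thm32} (which decomposes any precluded event into mutually disjoint primitive precluded events), together with the additivity facts of Theorem~\ref{thm31}, treating the two implications separately.

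For the ``if'' direction, I would first confirm that a single antipodal pair is precluded. If $\brac{x_{n,j},x_{n,k}}$ is antipodal then $a_n(x_{n,j})=-a_n(x_{n,k})$, so the formula $\mu_n(A)=\ab{\sum_{\omega\in A}a_n(\omega)}^2$ gives $\mu_n(\brac{x_{n,j},x_{n,k}})=\ab{a_n(x_{n,j})+a_n(x_{n,k})}^2=0$, hence the pair is precluded. Now suppose $A=\cup_{i=1}^mP_i$ is a disjoint union of antipodal pairs. I would induct on $m$, invoking Theorem~\ref{thm31}(b): the base case $m=1$ is the preceding computation, and in the inductive step I write $A=\paren{\cup_{i=1}^{m-1}P_i}\cup P_m$ as a disjoint union of two precluded events (the first precluded by the inductive hypothesis, the second an antipodal pair), so by Theorem~\ref{thm31}(b) their union $A$ is precluded.

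For the ``only if'' direction, suppose $A\subseteq\Omega_n$ is precluded. If $A=\emptyset$ it is the vacuous disjoint union of no pairs, so I may assume $A\ne\emptyset$. By Theorem~\ref{thm32}, $A$ is either primitive or a union of mutually disjoint primitive precluded events; in the former case $A$ is itself a single primitive precluded event, so in either case $A$ is a union of mutually disjoint primitive precluded events. By Theorem~\ref{thm51}, each such primitive piece is exactly an antipodal pair, so $A$ is a disjoint union of antipodal pairs. Combining the two implications yields the corollary.

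I expect no genuine obstacle here: the two structural theorems carry essentially all the weight, and the only real verifications are that an antipodal pair has $\mu_n=0$ and that Theorem~\ref{thm31}(b) bootstraps from pairs of disjoint precluded sets to arbitrary finite disjoint unions. The only points requiring a moment's care are the degenerate empty case and the need for pairwise disjointness of the decomposition, both of which are already supplied by Theorems~\ref{thm32} and \ref{thm31}.
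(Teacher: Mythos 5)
Your proof is correct and takes essentially the same approach as the paper, which derives Corollary~\ref{cor52} by simply citing Theorems~\ref{thm51} and \ref{thm32} --- exactly the two results carrying your argument. Your only addition is to make explicit what the paper leaves implicit: the ``if'' direction, via the computation $\mu _n\paren{\brac{x_{n,j},x_{n,k}}}=\ab{a_n(x_{n,j})+a_n(x_{n,k})}^2=0$ together with an induction on Theorem~\ref{thm31}(b), and the vacuous empty case.
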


\begin{exam}{4}  
We illustrate Corollary~\ref{cor52} by displaying the antipodal pairs in $\pscript _3$, $\pscript _4$ and $\pscript _5$. In $\pscript _3$ there is only one antipodal pair $(x_{3,0},x_{3,3})$. In $\pscript _4$ the antipodal pairs are
\begin{align*}
&(x_{4,0},x_{4,3}),(x_{4,0},x_{4,5}),(x_{4,0},x_{4,6})\\
&(x_{4,1},x_{4,7}),(x_{4,2},x_{4,7}),(x_{4,4},x_{4,7})
\end{align*}
In $\pscript _5$ there are 28 antipodal pairs. To save writing we use the notation $j=x_{5,j}$. The antipodal pairs in $\pscript _5$ are
\begin{align*}
&(0,3),(0,5),(0,6),(0,9),(0,10),(0,12)\\
&(15,3),(15,5),(15,6),(15,9),(15,10),(15,12)\\
&(1,7),(1,11),(1,13),(1,14),(2,7),(2,11),(2,13),(2,14)\\
&(4,7),(4,11),(4,13),(4,14),(8.7),(8,11),(8,13),(8,14)
\end{align*}
\end{exam}

According to the coevent formulation \cite{gtw9,sor07,sur11,wal13}, precluded events do not occur so we can remove them from consideration. What is left can occur in some anhomomorphic realization of possible universes \cite{gtw9,sur11,wal13}. We can remove a precluded event from $\Omega _n$ (or $\pscript _n$) which is as large as possible but there is no unique way of doing this, in general. To illustrate this method let us remove the ``left'' and ``right'' precluded extremes. In $\pscript _3$ we remove the precluded event
$\brac{x_{3,0},x_{3,3}}$ and we obtain
\begin{equation*}
A_3=\brac{x_{3,1},x_{3,2}}
\end{equation*}
with $\mu _3(A_3)=1$. In $\pscript _4$ we remove the precluded event
\begin{equation*}
\brac{x_{4,0},x_{4,1},x_{4,6},x_{4,7}}
\end{equation*}
and we obtain
\begin{equation*}
A_4=\brac{x_{4,2},x_{4,3},x_{4,4},x_{4,5}}
\end{equation*}
with $\mu _4(A_4)=1$. In $\pscript _5$ we remove the precluded event
\begin{equation*}
\brac{x_{5,0},x_{5,1},x_{5,2},x_{5,3},x_{5,4},x_{5,11},x_{5,12},x_{5,13},x_{5,14},x_{5,15}}
\end{equation*}
and we obtain
\begin{equation*}
A_5=\brac{x_{5,5},x_{5,6},x_{5,7},x_{5,8},x_{5,9},x_{5,10}}
\end{equation*}
with $\mu _5(A_5)=1$. Continuing this process, we conjecture that we obtain a sequence of events $A_n\subseteq\Omega _n$ where
$\ab{A_n}=2(n-2)$ and $\mu _n(A_n)=1$. Although $\ab{\Omega _n}$ increases exponentially, if this conjecture holds then $\ab{A_n}$ only increases linearly. This gives a huge reduction for the number of possible universes. If $A\subseteq\Omega$ satisfies $A^n=A_n$ then $A\in\sscript (\Omega )$ with $\mu (A)=1$ and $A'\in\sscript (\Omega )$ with $\mu (A')=0$. We would then conclude that $A'$ is precluded and a realizable universe would have to be in $A$.

\end{document}